\documentclass[12pt]{iopart}

\usepackage{graphicx}
\usepackage[normalem]{ulem}
\usepackage{array,amssymb,amsthm}
\usepackage{dcolumn}
\usepackage{bm}
\usepackage{hyperref}
\usepackage{xcolor}
\usepackage[noend]{algpseudocode}
\usepackage{amsfonts}
\usepackage[utf8x]{inputenc}
\usepackage{algorithm2e}
\usepackage{pseudocode}
\usepackage{multirow}
\usepackage{caption}
\usepackage{subcaption}
\usepackage{makecell}

\expandafter\let\csname equation*\endcsname\relax

\expandafter\let\csname endequation*\endcsname\relax

\usepackage{amsmath}

\newcommand{\arc}{\mathrm{ARC}}
\newcommand{\ket}[1]{|#1\rangle}

\newtheorem{lemma}{Lemma}
 
\begin{document}

\title[]{Entanglement Distribution in Multi-Platform Buffered-Router-Assisted Frequency-Multiplexed Automated Repeater Chains}

\author{Mohsen Falamarzi Askarani$^{1}$\footnote{Present address: Xanadu Quantum Technologies Inc., 777 Bay Street, Toronto, ON M5G 2C8, Canada}, Kaushik Chakraborty$^{1,2}$, Gustavo Castro do Amaral$^{1,3}$\footnote{Present address: TNO, Stieltjesweg 1, 2628 CK Delft, The Netherlands}}

\address{$^1$QuTech, Delft University of Technology, 2628 CJ Delft, The Netherlands}
\address{$^2$Informatics Forum, The University of Edinburgh, 10 Crichton St, Newington, Edinburgh EH8 9AB, United Kingdom}
\address{$^3$Center for Telecommunication Studies, Pontifical Catholic University of Rio de Janeiro, Rio de Janeiro, 22451-900, Brazil}
\ead{amaral@puc-rio.br}

\begin{abstract}
Realization of a globe-spanning quantum network is a current worldwide goal, where near and long term implementations will benefit from connectivity between platforms optimized for specific tasks. Towards this goal, a quantum network architecture is herewith proposed whereby quantum processing devices based on NV$^-$ colour centers act as quantum routers and, between which, long-distance entanglement distribution is enabled by spectrally-multiplexed quantum repeaters based on absorptive quantum memories in rare-earth ion-doped crystals and imperfect entangled photon-pair sources. The inclusion of a quantum buffer structure between repeaters and routers is shown to, albeit the increased complexity, improve the achievable entanglement distribution rates in the network. Although the expected rate and fidelity results are presented for a simple linear network (point-to-point), complex topologies are compatible with the proposed architecture through the inclusion of an extra layer of temporal multiplexing in the quantum router's operation. Figures of merit are extracted based on parameters found in the literature for near-term scenarios and attest the availability of the proposed buffered-router-assisted frequency-multiplexed automated repeater chain network.
\end{abstract}

\section{Introduction}

    A large-scale quantum communication network is essential to connect multiple quantum information processing nodes as well as to implement interesting applications like quantum key distribution, clock synchronisation and others \cite{wehner2018Qinternet}. The crucial communication task of this quantum internet is to distribute entanglement among any set of its nodes. The constituents of such network can, thus, be categorized according to their functionality: those responsible for processing quantum information, effectively implementing the protocols enabled by the quantum network; and those responsible for relaying quantum information across the network, providing connectivity between processing nodes. Due to the limit imposed by channel losses, quantum repeaters are necessary in order to achieve long-distance entanglement distribution, a technology that falls under the auspices of the second category.
	
	As had already been experienced during the deployment of the current global classical internet, managing the information flow across the network is not simple, and information processing structures are also required in the second category. For example, recently Rabbie \textit{et al} \cite{rabbie2020designing} showed that, in order to have a quantum network with high connectivity, some of the repeater nodes should be connected to many other repeater nodes. This type of repeater node should, thus, have extra information processing power in order to serve the multiple requests. Furthermore, depending on the repeater technology, the time between a request being issued in the network and an EPR-pair being available can reach high values. To avoid this issue, the authors in \cite{rabbie2020designing} proposed a quantum network with pre-shared entanglement. In these types of networks the nodes pre-establish the EPR-pairs and store them inside their quantum memories; these pre-shared EPR-pairs can be used for the rapid distribution of entanglement between two long-distance nodes in the network. However, the performance of such networks depend highly on the storage time as well as the efficiency of the quantum memories and the entanglement generation rate.

	Here, we focus on the connectivity category of the quantum network while considering the inclusion of quantum processing devices to achieve better performance compared to the proposal set-forth in \cite{rabbie2020designing}. Moreover, pre-generation of EPR pairs is currently not considered to alleviate the requirements of the quantum memories' storage time; by taking advantage of heavy multiplexing capacity, however, the throughput of the network can be increased. Consider, thus, the proposed network architecture, pictorially depicted in Fig. \ref{fig:networkArch}, where three distinct structures are identified: the quantum router (QR); the automated repeater chain (ARC) \cite{sangouard2011QR}; and the quantum buffer (B).
	
	\begin{figure}[ht]
		\centering
		\includegraphics[width=0.7\linewidth]{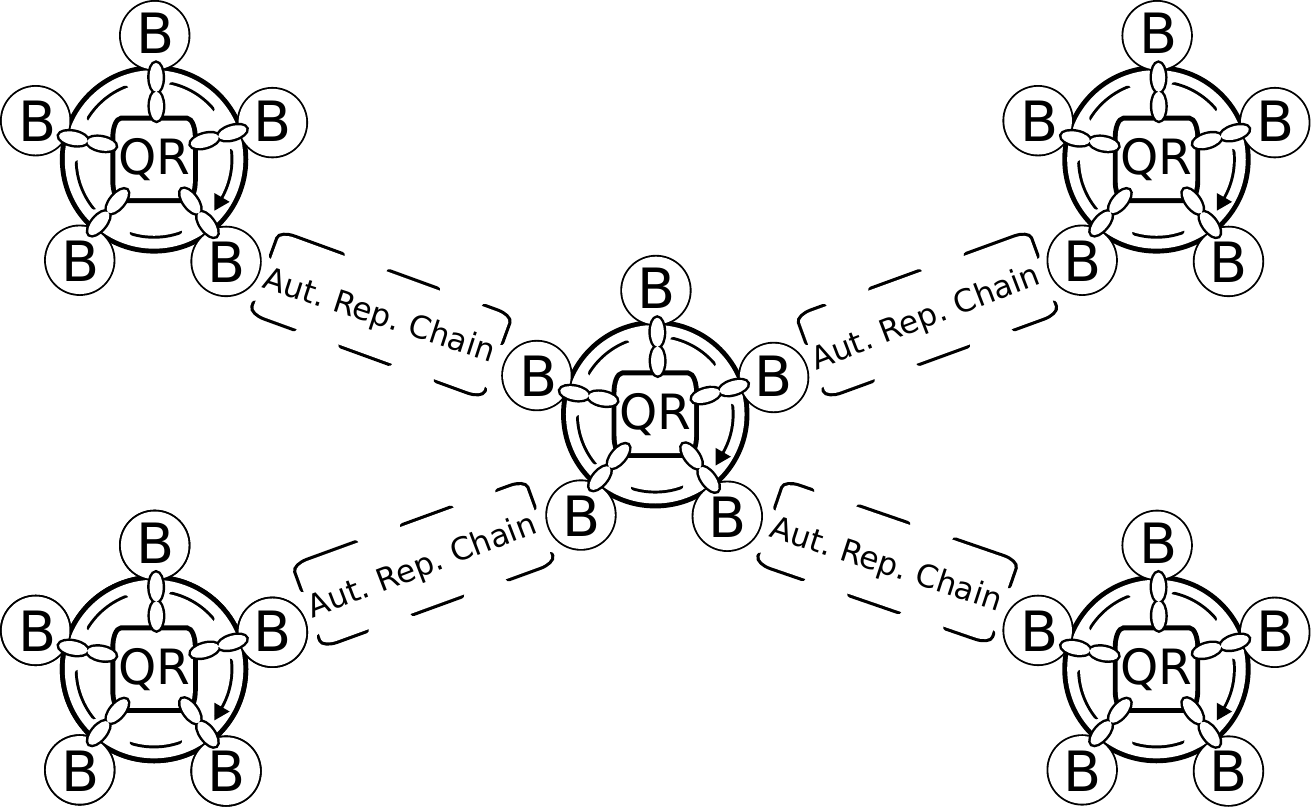}
		\caption{Outline of the proposed network architecture. The routers (QR) are directly connected to buffers (B), which, in turn, are connected through automated repeater chains (ARC).}
		\label{fig:networkArch}
	\end{figure}
	
    \begin{table}[!htb]
    \centering
    \caption{Task, Expected Characteristics, and Candidate Technologies} \label{table:requirements}
    \setlength\tabcolsep{5pt}
    \footnotesize\centering
    \smallskip 
    \begin{tabular}{c|c|c|c|c}
    Role & Task & Candidate Technology & Characteristics & Protocol\\
    \hline
    \hline
      \multirow{4}{5em}{Automated\\Repeater\\Chain} & \multirow{4}{10em}{Multiplexed and Heralded\\Entanglement Distribution} & \multirow{4}{10em}{SPDC-based EPP Sources and Broadband REI-based AFC QMs} & \multirow{4}{8em}{Inhomogeneously-broadened absorption profile} & \multirow{4}{5em}{Spectral Multiplexing}\\
      & & & &\\
      & & & &\\
      & & & &\\
      \hline
      \multirow{4}{5em}{Quantum\\ Buffer} & \multirow{4}{10em}{On-demand Quantum Storage and Quantum State Transfer} & \multirow{4}{10em}{Rubidium-based GEM QMs} & \multirow{4}{8em}{Over ms Storage Time and Unit Efficiency} & \multirow{4}{5em}{Temporal Multiplexing}\\
      & & & &\\
      & & & &\\
      & & & &\\
      \hline
      \multirow{4}{5em}{Quantum\\ Router} & \multirow{4}{10em}{Long Storage Time QM and Deterministic Quantum Information Processing} & \multirow{4}{10em}{NV$^{-}$ Color Center and $^{13}$C Nuclear Spins QMs} & \multirow{4}{8em}{QI Processing and Approaching-minute Coherence Time}  & \multirow{4}{5em}{Temporal Multiplexing}\\
      & & & &\\
      & & & &\\
      & & & &\\
    \hline
    \hline
    \end{tabular}
    \vspace{-4pt}
    \begin{flushleft}
    \scriptsize{$^{\ast}$ SPDC: Spontaneous Parametric Down Conversion, EPP: Entangled Photon-Pair, REI: Rare-Earth Ion, AFC: Atomic Frequency Comb, QM: Quantum Memory, NV: Nitrogen Vacancy, $^{13}C:$ Carbon 13, GEM: Gradient Echo Memory.}
    \end{flushleft}
    \end{table}
	
	The QR is a sophisticated unit with quantum information processing capacity that could be connected to multiple automated repeater chains in a complex network topology. ARCs, on the other hand, exhibit close to no information processing functionality but can deliver multiplexed and heralded entanglement distribution. The quantum buffer, in turn, provides an interface between these two structures. Despite the introduction of an extra layer of complexity, the buffer diminishes the idle time of the routers by storing the successfully heralded entangled pairs within the ARCs and making this information available when the routers are ready to process it. This is an especially sensitive task due to the unavoidably probabilistic nature of the entanglement swapping operations that connects elementary links composing the ARCs  \cite{BSMeff}.
	
	Over the last decade, several atomic and spin systems have been introduced as candidates for the underlying physical system of the backbone of the quantum internet. Currently, however, the different features exhibited by each candidate make it difficult to determine which will indeed be employed. By subdividing the tasks of routers, buffers, and the ARCs, the constraints on the underlying physical systems that implement each of these roles can be positively leveraged, i.e., by identifying tasks that can be facilitated by the available features of each system, thus, allowing for the development of a heterogeneous network architecture with enhanced performance.
	
	Table \ref{table:requirements} discriminates sought-after characteristics and possible candidates for each of the roles of the proposed network. Even though other candidates could be considered, this comparison is not within the scope of this work. Here, we perform a comprehensive study of the parameter regime where multi-platform quantum repeaters composed of negatively-charged Nitrogen vacancy (NV$^{-}$) colour centers and rare-earth ion-doped crystals (REICs) outperform networks built with a single constituent. We go on to show that interfacing automated repeater chains and quantum routers by means of a quantum buffer structure allows even higher entanglement distribution rates to be achieved. 
	
	Throughout this document, emphasis will be given to specific protocols and technologies that will serve as the building blocks of the quantum network, i.e., ARCs, routers, and buffers, as presented in Table \ref{table:requirements}. Other than allowing for realistic figures of merit to be extracted in order for estimation of the proposed network's capacity (in the near- and long-terms), choices are based on the compatibility of each of these physical systems to the assigned roles. This allows stratifying the network in layers that leverage complexity and multiplexing: in the lower layer, quantum memories with pre-set storage time and high level of spectral multiplexing compose the ARCs; in the intermediate layer, on-demand quantum memories are assigned to the buffers; and, finally, on the higher layer, quantum memories capable of quantum information processing implement the routers. Naturally, connectivity between the layers is essential and is discussed in the following \textcolor{black}{section. The entanglement generation protocol considered for the proposed network is introduced in Section 3. Section 4 considers perfect but finite lifetime components to derive lemmas for the network entanglement distribution rate analysis and Section 5 takes decoherence in the components into account by considering effective fidelities of noisy entangled states to estimate fidelity of the final distributed states. Results discussion and concluding remarks are presented in Sections 6 and 7, respectively. }. Before that, we present, in the next subsection, the technical details of the platforms considered in the proposed network in order to highlight the overlap with the expected characteristics depicted in Table \ref{table:requirements}.
	
	\subsection{Technical Details}
     
    \textit{Quantum Memories (QMs)}: Absorptive spectrally-multiplexed quantum memories exhibiting pre-programmed storage times and large storage bandwidth have been demonstrated in REICs \cite{lvovsky2009optical,sinclair2014spectral}. REICs offer long optical coherence lifetime, long-lived spin level lifetime, and wide absorption spectrum; combination of all in conjunction with the atomic frequency comb (AFC) memory protocol, whose multi-mode capacity is exceptionally independent of the number of available ions, enables creation of spectrally-multiplexed quantum memories. Here, Tm$^{3+}$:Y$_{3}$Ga$_{5}$O$_{12}$ (Tm:YGG) crystal \cite{thiel2014Tm:YGG,MFA2021Tm:YGG} is considered for the role of QM, as it exhibits sufficiently long optical coherence lifetime -- an upper-bound for the pre-programmed storage time --, 56 GHz-wide absorption window centred at 795nm wavelength -- which ultimately determines the maximum number of receptive spectral modes --, and spin level lifetime as long as tens of seconds -- which guarantees the persistency of the AFC memory. The AFC memory preparation is performed via so-called frequency-selective optical pumping, which takes nearly one second in Tm:YGG crystal \cite{thiel2014Tm:YGG,MFA2021Tm:YGG} and it is well below the lifetime of the long-lived shelving level (the ground-state Zeeman level), which is much longer than the 1.3~ms excited level lifetime -- a stringent requirement for efficient optical pumping. A few tens of milliseconds of waiting time are necessary after optical pumping to avoid detrimental spontaneous processes during the operational time. The total experimental time is given by the persistency of the AFC memory, which is upper-bounded by the longevity of the ground-state Zeeman upper level, guaranteeing good compromise between preparation and experimental time duration.
     
    \textit{EPPSs}: In the ARCs considered here, entangled-photon-pairs at distinct spectral modes are generated by means of spontaneous parametric down-conversion (SPDC) in a second-order non-linear crystal pumped by a strong input signal, which determines the repetition rate of the EPPS. While the output spectral modes at 795nm wavelength are matched with the storage bandwidth of the QMs \cite{Entan2QMs}, the remaining photon of the pair exhibits a wavelength of 1535nm, compatible with low-loss optical fiber transmission.
     
    \textit{Buffers}: Quantum buffers are realized using rubidium vapours, and operated according to the gradient echo memory (GEM) protocol, that exhibit an on-demand storage time of several milliseconds, in-principle unit retrieval efficiency, and narrow storage bandwidth, whose central wavelength can be chosen to match the wavelength of the absorptive quantum memories, in this case, 795nm. The $^{87}$Rb buffer is prepared through a few different phases, starting with a magneto-optical trap (MOT) loading phase, that is implemented to obtain sufficient interactable ions, followed by the application of a magnetic field gradient required for the GEM protocol, which collectively takes a few hundreds of millisecond. A short optical pumping phase follows, whereby the desired Zeeman sub-level is populated with available ions, and, finally, an idle phase with waiting time of approximately one milliseconds prior to the qubit storage and retrieval phase \cite{RbGEM}. Again for the buffer, as for the QMs, preparation and experimental time duration allows efficient usage.
     
    \textit{Routers}: The negatively-charged Nitrogen Vacancy center-- a defect in diamond-- has been extensively explored for quantum information applications as it can be addressed with both optical and microwave signals. In addition to the electron spin (S=1) of the NV$^{-}$ center, it carries a nuclear spin of I=1 associated with the nitrogen atom whose $^{14}$N is the most common isotope. The energy levels of the electron and nuclear spins can be manipulated through Zeeman effect---by means of external magnetic field---and hyperfine interaction, which mediates the coupling between the electron and nuclear spins; such nuclear spins can be used for quantum state storage thanks to their very long coherence time. Due to the same reason, surrounding Carbon nuclear spins, in particular the $^{13}$C isotope, can be also employed as storage medium. As thoroughly detailed in refs \cite{nemoto2014cavitymapping, nemoto2016photonic}, an incoming photonic quantum state can interact with the NV$^{-}$ center and, in turn, with the $^{14}$N or $^{13}$C nuclear spin through an NV-embedded optical cavity system. Thanks to very long coherence times of $^{13}$C nuclear spins  surrounding the NV$^{-}$ electron spin and the hyperfine coupling \cite{abobeih2018one}, the quantum state is transferred and stored into one of these nuclear spins. After the transfer, the electron spin is again free to accept another photon.

    \section{Building blocks and connectivity of the proposed network}\label{sect:II}
    
    \begin{figure}
    \captionsetup[subfigure]{justification=centerlast,singlelinecheck=off}
    \centering
    \begin{subfigure}[b]{0.81\linewidth}
       \includegraphics[width=1\linewidth]{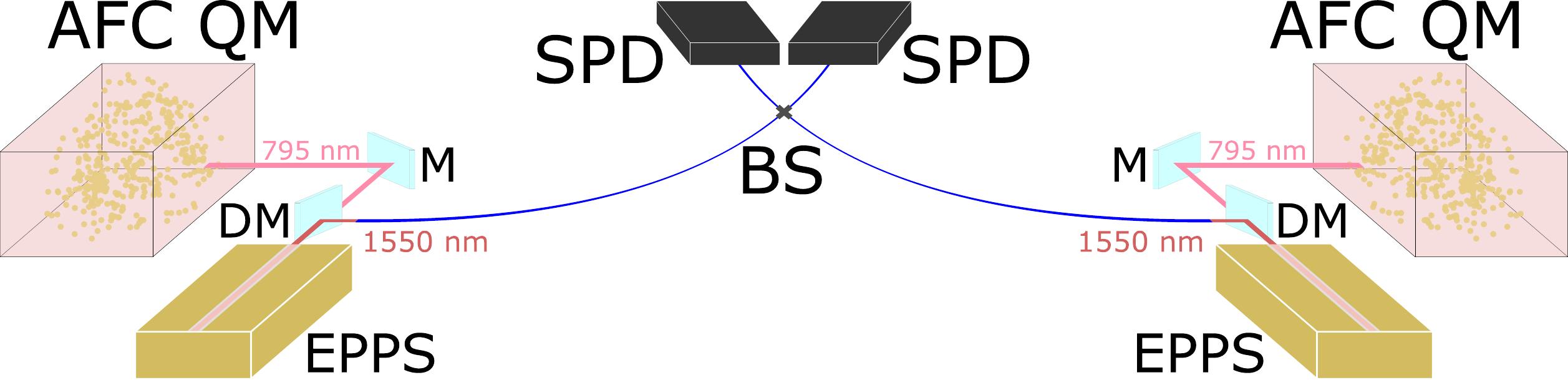}
       \caption{Elementary link connectivity. Successful projection of the quantum states encoded into two 1550 nm photons from different sources onto the Bell-basis casts the states stored in the QMs into an entangled state. DM: Dichroic mirror; M: mirror; SPD: single-photon detector; BS: optical beamsplitter.}
       \label{fig:ElementaryLink} 
    \end{subfigure}
    ~
    
    \begin{subfigure}[b]{0.81\linewidth}
       \includegraphics[width=1\linewidth]{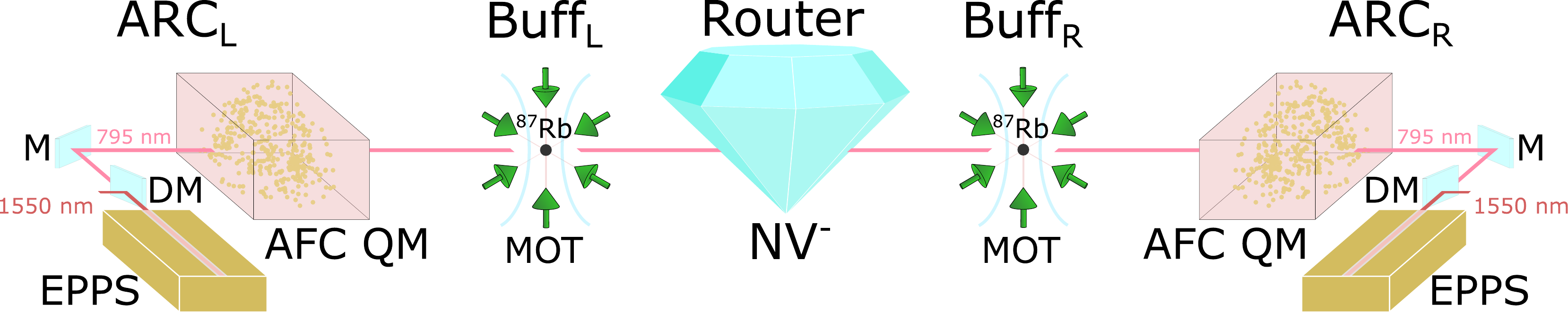}
       \caption{Connectivity between two adjacent ARCs mediated by a QR assisted by buffers. MOT: Magneto-Optical trap.}
       \label{fig:RouterARC_connect}
    \end{subfigure}
    ~
    
    \begin{subfigure}[b]{0.81\linewidth}
       \includegraphics[width=1\linewidth]{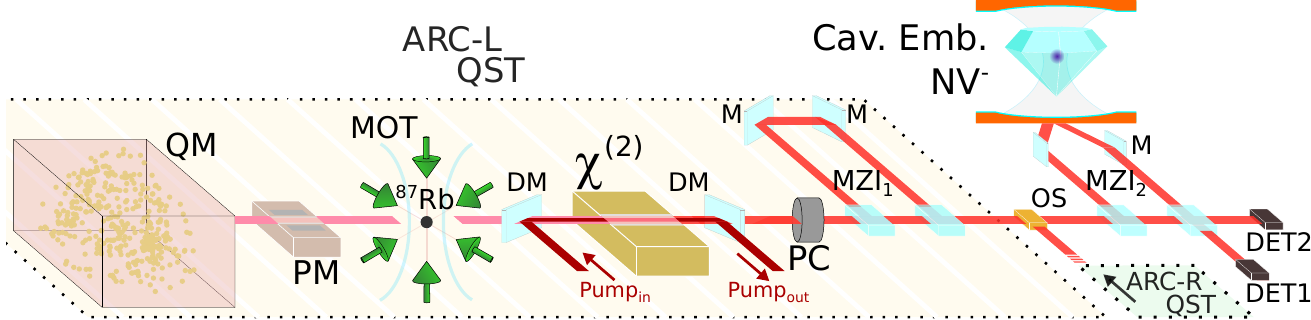}
       \caption{Detailed quantum state transfer (QST) from QM, through buffer, into the internal memory of the QR. PM: Electro-optical phase-modulator; MZI$_i$: Mach-Zender interferometer; OS: optical switch; $\chi^{\text{(2)}}$: non-linear crystal.}
       \label{fig:quantStateTransfer}
    \end{subfigure}
    \caption{Connectivity between elements of the proposed architecture.}
    \label{fig:experimental setup}
    \end{figure}
    
    Although the proposed network could accept a complex topology, whereby routers would be connected to more than two ARCs \cite{rabbie2020designing}, we use, throughout the analysis, the basic example of an one-dimensional repeater link for simplicity. In this particular topology, two so-called end nodes, with advanced quantum computation capacity, attempt to share entanglement (in the form of an EPR-pair) through a point-to-point quantum channel. In our proposal, the quantum channel is composed of ARCs connected to routers by means of quantum buffers. The ARCs, in turn, are made up of concatenated elementary links, a possible realization of which is depicted in Fig. \ref{fig:ElementaryLink}.
    
    In this realization (AFC-based ARC), entanglement is distributed between two quantum memories within an elementary link by means of entangled photon pair sources and linear-optic Bell state projection measurements (BSM) that comprise the entanglement swapping protocol; presently, the analysis considers a double-click protocol, whereby quantum information is encoded into the temporal degree-of-freedom of flying photons (time-bin encoding \cite{Gisin2002Time-binQB}). After a successful entanglement swapping operation is heralded by two consecutive elementary links, interconnection through another round of entanglement swapping is achieved by retrieving the quantum states stored in the quantum memories and performing yet another BSM \cite{sangouard2011QR}. The storage time of the quantum memories in all of the repeaters inside the ARCs are set to be $t^{\text{AFC}}$, which must be smaller than the decoherence time of the memory.  

    In case successful entanglement swapping within and between all the elementary links that constitute a particular ARC is heralded, the quantum memories at either edges of the chain will be, ideally, storing individual parties of a bipartite maximally entangled state. Unfortunately, the particular case described above scarcely happens as the number of concatenated elementary links increases: the very low individual success probabilities are combined to produce an extremely low overall success probability. To counteract this harsh scaling \cite{sangouard2011QR}, the multiplexing feature enabled by the ARC allows for enhanced entanglement distribution rate between the quantum memories within elementary links. In the present context, multiplexing refers to spectral multiplexing, where entangled photon pairs are created occupying distinct spectral modes and can, therefore, be multiplexed into the same temporal and spatial mode; this requires the entangled photon pair sources to be able to generate such spectrally-multiplexed entangled photonic states and the quantum memories to be sufficiently broadband such that they can store all the generated spectral modes \cite{sinclair2014spectral}.

    Let us discuss the situation where two ARCs, connected to either side -- left (ARC-L) and right (ARC-R), according to Fig. \ref{fig:RouterARC_connect}) -- of a quantum router, have been able to successfully herald entanglement distribution between their respective edge quantum memories. The goal for the quantum router is to load the quantum states of both QMs, between which a CNOT-based BSM operation can be performed within the router, so that, finally, the far-end QMs of both ARCs become entangled following the entanglement swapping protocol. A problem arises at this point since the quantum router, implemented by an NV$^{-}$ colour center in this proposal, can only load the quantum state from one of the ARC's at a time. Due to the limited pre-programmed storage time of the QMs, which is at least one order of magnitude smaller than the time necessary for the quantum state to be loaded onto the quantum router, the quantum state will be lost. Thus, it is necessary to introduce another device, namely a quantum buffer, able to interface the quantum state transfer, effectively mediating the router-loading steps.

    Such a quantum buffer must hold on to the quantum states produced at the edges of the ARCs whenever successful entanglement swapping across the latter is heralded; thus, a long-lived quantum memory is necessary; on-demand storage time is also necessary so that the states can be accessed whenever the router is available. It is important to note that, at the edges of the ARCs, the time-bin photonic qubits are no longer spectrally-multiplexed, since only the successful spectral mode has been recovered after a chain of feed-forward spectral mode-mapping (FFSMM) operations \cite{FeedforwardMulti}. Therefore, the buffer has no constraints in terms of spectral broadness and, thus, narrow-linewidth, long-lived, on-demand quantum memories can be employed for this task, such as presented in Table \ref{table:requirements}. Provided that the storage time of the buffer is sufficiently long, the quantum states can be mapped, one after the other, onto the quantum router so that inter-connectivity across the whole network is achieved.

    A critical aspect of the proposed network architecture is the analysis of the quantum state transfer from the QM on the edge of the ARCs, through the buffer, and into the router's internal memory. The proposed setup is depicted in Fig. \ref{fig:quantStateTransfer} and works as follows. After a pre-programmed re-emission from the QM, the spectrally multiplexed time-bin photonic qubits undergo a frequency-shifting and filtering process based on the information from the remote BSM, which constitutes the FFSMM; at the output of the FFSMM, only the successful mode is present. The overall efficiency of this process is denoted by $\eta^{\text{shift}}$. The successful mode is directed to the buffer, which is prepared to absorb, store and retrieve the time-bin photonic qubit in an on-demand fashion that is triggered based on the availability of the QR; the operational optical wavelength of QM and buffer is the same, requiring no frequency conversion at this stage. The buffer storage efficiency and maximum storage time are represented by $\eta^{\text{BUFF}}$ and $t^{\text{BUFF}}_{\text{spin}}$, respectively.

    After retrieval from the buffer, the operational wavelength must be converted from 795nm to 637nm so that it is compatible with the transition frequency of the NV. To do so, a quantum frequency conversion apparatus is considered, where a strong pump at $\lambda_{\text{pump}}=3.4~\mu$m interacts with the input qubit at $\lambda_{\text{in}} = 795$~nm in a non-linear crystal, resulting in up-conversion to $\lambda_{\text{out}}=637$~nm; this process is characterized by a conversion efficiency $\eta^{\text{QFC}}_{637}$. In order for the time-bin photonic qubit to be mapped onto the electron spin state of the NV using the protocol set forth in \cite{nemoto2014cavitymapping}, time-bin to polarization mapping is necessary -- an equivalent, but reversed, mapping has been demonstrated in \cite{Pan-PolarizBinToTimebin}. This is enforced by a combination of a fast Pockels cell -- that switches the polarization state of the early and late time-bins such that $\ket{e}\longrightarrow\ket{e,H}$ and $\ket{\ell}\longrightarrow\ket{\ell,V}$ -- and an unbalanced Mach-Zender interferometer (MZI$_1$) -- which erases the time information and outputs a polarization photonic qubit with efficiency $\eta^{\text{POL}}$ \cite{Pan-PolarizBinToTimebin}. 
    
    The NV-embedded optical cavity is placed inside a Mach-Zender-like-interferometer (MZI$_2$) with polarizing beam-splitters (PBS). A note is made, here, that, due to the similarity between MZIs 1 and 2, a single interferometer may be sufficient to perform the mapping of a time-bin photonic qubit directly to the NV's electronic spin; this possibility is not currently considered for simplicity. Before MZI$_2$, a photonic switch is connected, allowing the output of different ARCs (-L and -R) to be directed to the router in a synchronous fashion. Mapping onto the electron spin is performed conditioned on a successful detection pattern at the detectors connected to the output of MZI$_2$, a process characterized by its mapping efficiency $\eta^{\text{MAP}}$. Finally, a mapping from the electron spin onto a long-lived $^{13}$C internal memory takes place, with duration $t^{^{13}\text{C}}$ and efficiency $\eta^{^{13}\text{C}}$. With the electron available, the state retrieved from the remaining buffer can be loaded by acting on the optical switch and following the same procedure; connectivity between two ARCs through a QR can, finally, be achieved after a CNOT-based BSM between two $^{13}$C memories. We abstract the total efficiency of the entire process, i.e., from storage and retrieval within the buffer, all the way to storage inside the $^{13}$C memory (from left to right according to Fig. \ref{fig:quantStateTransfer}), with the parameter $\eta^{\text{QR}}$:
    \begin{equation}
        \label{eq:pbuf}
        \eta^{\text{QR}} = \left(\eta^{\text{BUFF}}\eta^{\text{QFC}}_{637}\eta^{\text{POL}}\eta^{\text{MAP}}\eta^{^{13}\text{C}}\right).
    \end{equation}
    The FFSMM operation, even though depicted in Fig. \ref{fig:quantStateTransfer}, is considered a part of the ARC operation and, as such, is not included into the quantum state transfer between the ARC and the QR.
    
    A block diagram representing the proposed connectivity between two quantum routers mediated by an AFC-based ARC with $n=3$ elementary links is depicted in Fig. \ref{fig:AFC-basedARC}. In the green-highlighted block, the probability of entanglement generation within a single elementary link is abstracted by $P^{\text{AFC}}_{\text{gen}}$, with $S$ representing EPPSs, $M$ representing QMs, and LO-BSM the linear-optic-based BSMs; $P^{\text{AFC}}_{\text{gen}}$, will be formally defined in Section \ref{sect:IV}. At the interface with the QR, $T$ represents the quantum state transfer detailed in Fig. \ref{fig:quantStateTransfer} and is broken down into $F$ -- the FFSMM operation --, $B$ -- the storage into the buffer--, $\nu$ -- the QFC step --, and $m$ -- the time-bin to polarization photonic qubit mapping.
    
    \begin{figure}[h]
	\centering
	\includegraphics[width=0.8\linewidth]{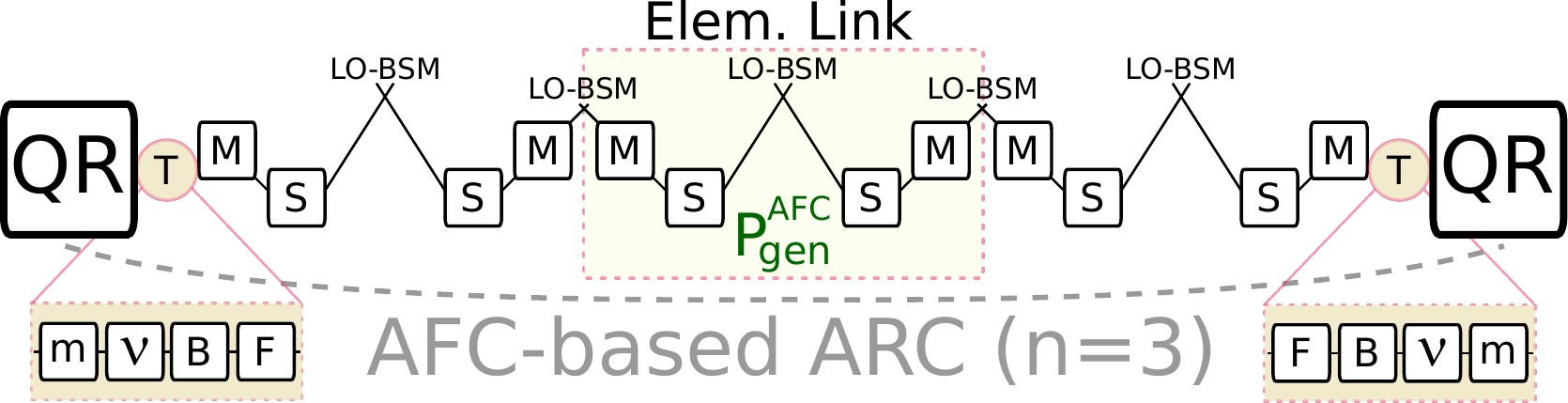}
	\caption{Pictorial representation of the proposed architecture connecting two quantum routers through an AFC-based ARC composed of $n=3$ concatenated elementary links.}
	\label{fig:AFC-basedARC}
    \end{figure}
    
    The parameters introduced here, as well as other ones relevant to the entanglement distribution rate calculation, are summarized in Table \ref{tab:params1} along with values for near- and long-term implementations as well as idealized values. All near-term parameters were extracted directly from recent experimental works and under specific experimental conditions, i.e., preparation schemes and protocols. Those values of long-term parameters were arbitrarily chosen considering the prospects of the mentioned technologies in the next 10 to 20 years. The ideal parameters take into account the absolute theoretical limits of each platform and, even though will not be included in the simulation results, allow gauging the status of the proposed network for near- and long-term parameters.
    
    \begin{table}[ht]
    \centering
    \renewcommand{\arraystretch}{1.1}
    \begin{tabular}{c | c || c | c | c}
    Symbol & Description & Near-Term & Long-Term & Ideal\\
    \hline
    \hline
    \multicolumn{5}{c}{NV$^{-}$}\\
    \hline
    $\eta^{\text{NV}}$ &
    \centering NV$^{-}$ Emission Efficiency &
    5\% \cite{NVeffCavity} & 40\% & 100\%\\
    $t^{\text{NV}}$ &
    \centering NV Storage Time &
    1s \cite{abobeih2018one} &
    10s &
    20s\\
    $t^{^{13}\text{C}}$ &
    $e^{-}\!\rightarrow\!^{13}\!C$ Swap Time &
    500$\mu$s \cite{C13gates} &
    100$\mu$s &
    10$\mu$s\\
    $\eta^{^{13}\text{C}}$ &
    $e^{-}\!\rightarrow\!^{13}\!C$ Efficiency &
    90\% \cite{C13gates} &
    99\% &
    99.9\%\\
    $t^{\text{CNOT}}$ &
    CNOT Time &
    500$\mu$s \cite{C13gates} &
    100$\mu$s &
    10$\mu$s\\
    $\eta^{\text{QFC}}_{1588}$ &
    $\lambda_{\text{NV}^{-}}\!\rightarrow \lambda_{\text{Telecom}}$ QFC Efficiency &
    43\% \cite{Pan-PolarizBinToTimebin} &
    70\% &
    99\%\\
    $\Gamma_t$ &
    Temporal Modes &
    27 \cite{C13gates} &
    100&
    1000\\
    \hline
    \multicolumn{5}{c}{AFC and EPPS}\\
    \hline
    $\eta^{\text{EPPS}}$ &
    EPPS Efficiency &
    10\% \cite{Entan2QMs} &
    10\% &
    10\%\\
    $\eta^{\text{AFC}}$ &
    AFC Memory Efficiency &
    40\% \cite{AFCcavityeff} &
    75\% &
    99\%\\
    $t^{\text{AFC}}$ &
    \centering AFC Storage Time &
    100$\mu$s \cite{thiel2014Tm:YGG,MFA2021Tm:YGG} &
    300$\mu$s &
    500$\mu$s\\
    $R^{\text{EPPS}}$ &
    EPPS Rate &
    10$^8$Hz \cite{Entan2QMs} &
    10$^9$Hz &
    2$\cdot$10$^{9}$Hz\\
    $\Gamma_f$ &
    Spectral Modes &
    30 \cite{sinclair2014spectral} &
    300 &
    3000\\
    $\eta^{\text{shift}}$ &
    Shift+Filter Efficiency &
    70\% \cite{Pan-PolarizBinToTimebin} &
    95\% &
    99\%\\
    \hline
    \multicolumn{5}{c}{Quantum Channel}\\
    \hline
    $\eta^{\text{BSM}}$ &
    BSM Efficiency &
    50\% \cite{BSMeff} &
    50\% &
    75\%\\
    $\eta^{\text{DET}}$ &
    Detection Efficiency &
    95\% \cite{SNSPDeff} &
    99\% &
    99.9\%\\
    $\alpha$ &
    Optical Fiber Attenuation (db/km) &
    0.2 &
    0.146 &
    0.146\\
    \hline
    \multicolumn{5}{c}{Router and Buffer}\\
    \hline
    $\eta^{\text{BUFF}}$ &
    Buffer Efficiency &
    30\% \cite{RbGEM} &
    90\% &
    99\%\\
    $t^{\text{BUFF}}_{\text{opt}}$ &
    Buffer Optical Coherence Time &
    30ns \cite{T2Rubidium} &
    100ns &
    500$\mu$s\\
    $t^{\text{BUFF}}_{\text{spin}}$ &
    Buffer Storage Time &
    1ms \cite{RbGEM} &
    100ms &
    500ms\\
    $\eta^{\text{MAP}}$ &
    $\text{Phot. Qubit}\!\rightarrow\!e^{-}$ Efficiency &
    6\% \cite{nemoto2014cavitymapping} &
    50\% &
    99\%\\
    \hline
    \multicolumn{5}{c}{Quantum State Transfer}\\
    \hline
    $\eta^{\text{POL}}$ &
    T.B.Q.$\rightarrow$Pol.Q. Efficiency &
    90\% \cite{Pan-PolarizBinToTimebin} &
    99\% &
    99\%\\
    $\eta^{\text{QFC}}_{637}$ &
    $\lambda_{\text{Buffer}^{-}}\!\rightarrow \lambda_{\text{NV}^{-}}$ QFC Efficiency &
    43\% \cite{Pan-PolarizBinToTimebin} &
    70\% &
    99\%\\
    \hline
    \end{tabular}
    \caption{Detailed efficiency parameters of all platforms utilized in the multi-platform network.}
    \label{tab:params1}
    \end{table} 

\section{Entanglement Generation Protocol Over the Network}\label{sect:III}
     
     As discussed thus far, the operation of the proposed network relies on successful entanglement swapping operations across the whole ARC and subsequent storage of the successful modes into the QR's long-live internal memory for further interconnection between distant QRs. An important distinction between the QR's memory and those QMs that compose the ARC must again be pointed out: due to their ensemble-based nature, the QMs contain multiple absorbers that allow, in turn, storage of photonic qubits emitted from the source at different times; the QR's internal memory, even though exhibiting much longer storage times and on-demand retrieval, does not offer the same versatility. The constraint imposed by this distinction is that, in order for the mapping into the QR's internal memory to take place, it is imperative that the entanglement swapping across the ARC has been successful. Due to the probabilistic nature of this event, the heralding signal from each individual elementary link must reach the QR node so that a decision between discarding a photonic qubit or storing it into the QR can be made.
     
     \begin{figure*}[ht]
	 \centering
	 \includegraphics[width=0.8\linewidth]{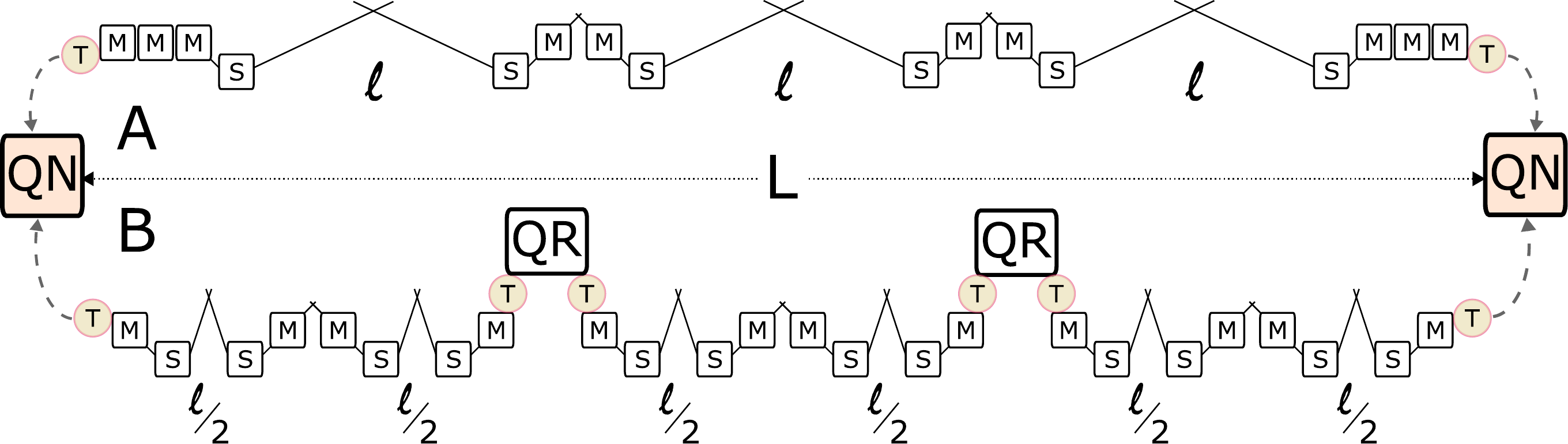}
	 \caption{Configurations A (top) and B (bottom) for the proposed multi-platform network. In A, the inclusion of extra QMs at the edges of the ARCs enforces synchronous arrival of photonic qubits and complete ARC information at the QRs connected to the ARC while maintaining the elementary link $\ell$ parameterized by the QM's storage time. In B, the length of the elementary link is reduced by a factor $\xi=2$ to enforce synchronous arrival within the ARC; although extra QMs are not necessary at the edges of the ARC, multiple ARCs are required to cover the same distance as in A.}
 	 \label{fig:sync_solAB}
     \end{figure*}
     
     It is useful to consider an example: an ARC composed of three elementary links, each 20km-long, and connected at either edge to a QR, as depicted in Fig. \ref{fig:AFC-basedARC}; furthermore, let us assume that the electron-spin-to-internal-QR-memory swap time ($t^{^{13}\text{C}}$) is $500\mu s$ and the EPPS rate ($R^{\text{EPPS}}$) is $10^8$Hz, all parameters according to the first column of Table \ref{tab:params1}. In this scenario, the photonic qubits are re-emitted from the QMs at the edges of the ARC after a time $t^{\text{RT}}$ (RT for round-trip time within an elementary link) proportional to the length of the elementary link and in accordance with the maximum storage time of the QMs ($t^{\text{AFC}}_{\text{max}} = t^{\text{RT}} = \tfrac{\ell}{c/n_r}$). Without loss of generality, we consider that the network operation starts at $t=0$ so that, at time $t = t^{\text{RT}}$, only the information associated with the BSM performed in the elementary link at the edge of the ARC is available, which is in accordance with the FFSMM operation, so that the successful spectral mode can be isolated. However, at this time, the information about whether or not the remaining two elementary links in the ARC have also been successful is not yet available; in fact, it is straightforward to show that a total time $(n-1)t^{\text{RT}}$ -- where $n$ is the number of elementary links composing the ARC -- is required for the arrival of this information at the QR.
     
     The QR is, then, faced with a choice: store this qubit in its internal memory, or discard it. For simplicity, let us assume that, at $t = t^{\text{RT}}$, i.e., the time corresponding to the first temporal mode emitted from the EPPS being re-emitted from the QM, the QR chooses to store. Recall, now, that the swapping operation from the electron spin to the QR's internal memory takes $500\mu s$, which means the QR will be unavailable for the next $m = 500\cdot10^{-6}\cdot10^8 = 5\cdot10^4$ temporal modes re-emitted from the QM; not only that, but there is no certainty that the temporal mode that has been stored corresponds to an overall successful entanglement swapping operation across the entire ARC. Hence, the optimal operation of the router-based network is dependent on the availability of the full ARC's information.
     
     This constraint is in contrast with the one imposed at each elementary link: there, the maximum achievable storage time of the QM ($t^{\text{AFC}}_{\text{max}}$) determines the length of the elementary link through the relation $t^{\text{AFC}}_{\text{max}} = t^{\text{RT}} = \tfrac{\ell}{c/n_r}$ because a FFSMM operation based on the remote BSM result within the elementary link is the only requirement to achieve interconnection between concatenated elementary links. Interconnection between QRs, on the other hand, cannot rely simply on the information associated to the elementary link closest to it (at the edge of the ARC) but, rather, on the information of the entire ARC. The hierarchical difference between elementary links and ARCs becomes clear: as the EPPSs, QMs, BSMs, and FFSMMs are connected to form an elementary link, its operation depends on information contained within the elementary link; as the elementary links are concatenated to form an ARC, its operation depends on the information contained within the ARC, i.e., the aggregate of each elementary link's information. Since the ARC communicates with the QRs through its edges, for proper operation, all the information from all elementary links must be available at the edges simultaneously with the retrieval of its corresponding photonic qubit.
     
     The total time required for the full ARC information to reach the edges corresponds to $t^{\text{ARC}} = \tfrac{L}{c/n_r} = \tfrac{n\ell_{\text{max}}}{c/n_r}$. This creates a synchronism issue across the ARC since the photonic qubits are retrieved from the QMs at the edges of the ARC earlier than when the ARC information is available; as already established, the ARC information is required simultaneously with the retrieval of its corresponding photonic qubit so that the quantum router can determine whether to store the latter into its internal memory or discard it when it does not correspond to a string of successful entanglement swapping operations. So far, the buffer has been considered so that the connectivity between two adjacent ARCs is achieved; however, under auspicious conditions, the role of mediating the flow of information between the ARC and the QR can also be assigned to the buffer, as will be described.
     
     As previously mentioned, and depicted in Table \ref{table:requirements}, the buffer operates under a temporal multiplexing protocol such that photonic qubits generated at different times (different temporal modes) can be stored in the buffer and extracted on-demand. The total number of such temporal modes that can be stored ($m_{\text{temp}}$) is determined by the product between the optical coherence time ($t^{\text{BUFF}}_{\text{opt}}$) of the buffer and the emission rate ($R^{\text{EPPS}}$) of the EPPSs according to $m_{\text{temp}} = t^{\text{BUFF}}_{\text{opt}}\cdot R^{\text{EPPS}}$. Provided $t^{\text{BUFF}}_{\text{opt}}$ is sufficiently high, the number of modes stored in the buffer can be such that the probability that at least one of these corresponds to a successful entanglement distribution across the ARC is close to unity. Furthermore, the maximum storage time in the buffer is determined by its spin coherence time ($t^{\text{BUFF}}_{\text{spin}}$), which, again, if sufficiently high, can cover the time necessary for the total ARC information to reach the QR node to ensure proper operation -- namely, $t^{\text{BUFF}}_{\text{spin}} \geq (n-1)t^{\text{RT}}$. Unfortunately, near-term parameters for buffers, as depicted in Table \ref{tab:params1}, do not match the requirements posed above, specifically regarding $t^{\text{BUFF}}_{\text{opt}}$, which is in the order of nanoseconds. Therefore, the buffer cannot mediate the synchronism between QR and ARC so that other network configurations that overcome this issue must, then, be identified; we hereby propose and discuss two such possible alternatives. The first consists of including $n-1$ QMs at the edges of the ARC so that the photonic qubits are stored and retrieved multiple times until their corresponding ARC information becomes available. The second involves shortening the elementary link length so that the storage time $t^{\text{AFC}}_{\text{max}}$ of a single QM is enough to retrieve the photonic qubit simultaneously with the arrival of the ARC information. In the following, the advantages and shortcomings of each configuration, both of which are pictorially depicted in Fig. \ref{fig:sync_solAB}-A and -B, respectively, are scrutinized allowing for numerical results to be drawn in Section \ref{sect:IV}.
     
     \subsubsection{Maximum Elementary Link Length and Extra QMs} \label{solution1}
     
     The blueprint of this alternative is depicted in Figure \ref{fig:sync_solAB}-A and involves the inclusion of QMs at the edges of the ARCs. The number of extra QMs required to achieve synchronism scales with the number of elementary links that compose the ARC as $2(n-1)$. The very characteristic of the AFC-based QM previously mentioned is taken into account here, i.e., the fact that multiple photonic qubits, emitted at different times by the EPPSs, can be simultaneously stored in the QM and retrieved at corresponding times given by $t^{\text{AFC}}_{\text{max}}$.
     
     The immediate advantage of this configuration is also its shortcoming: the fact that the structure of the network is preserved except for the inclusion of an extra $\left(\eta_{\text{AFC}}\right)^{(n-1)}$ factor in the expression of $\eta^{\text{QR}}$:
     \begin{equation}
        \label{eq:pbuf_sync}
        \eta^{\text{QR}}\!=\!\left(\eta^{\text{AFC}}\right)^{n\!-\!1}\eta^{\text{BUFF}}\eta^{\text{QFC}}_{637}\eta^{\text{POL}}\eta^{\text{MAP}}\eta^{^{13}\text{C}}.
     \end{equation}
     This means that the length of the elementary links is maintained to match $t^{\text{AFC}}_{\text{max}}$, the maximum achievable pre-set storage time of the QMs. Furthermore, it is straightforward to compute the total number of QMs required for this configuration to be $4n-2$, in case the connectivity between nodes separated by a distance L, proportional to the elementary link length $\ell_{\text{max}}$ according to $L=n\ell_{\text{max}}$, is considered.
     
     \subsubsection{Shorter Elementary Link Length}
     
     An alternative approach to the inclusion of extra QMs at the edges of the ARC is shortening the elementary link length, the blueprint for which is depicted in Figure \ref{fig:sync_solAB}-B. Here, the length $\ell_{\text{max}}$, that corresponds to a maximum pre-set storage time $t^{\text{AFC}}_{\text{max}}$, is used as reference and shortened by a factor $\xi\in\mathbb{N}|\xi\geq2$; the factor $\xi$, by definition, also corresponds to the maximum number of elementary links that can be included in a single ARC. This way, the maximum pre-set storage time of the QMs at the edges of the ARC, and connected to the QRs, automatically correspond to the time necessary for the entire ARC information to arrive at its edges.
     
     Although the extra $\left(\eta^{\text{AFC}}\right)^{n\!-\!1}$ term for $\eta^{\text{QR}}$ is not included in this configuration (in contrast with the previous one), the ARC length is limited and, thus, more ARCs are required to interconnect nodes separated by a given distance; although not the immediate focus of the analysis, this could potentially lead to reduced \textit{fidelities} of the distributed EPR-pairs. Using the same reference as before (nodes separated by $L=n\ell_{\text{max}}$), it is possible to show that this configuration makes use of $2\xi n$ QMs. Not only that, but $\xi n-1$ extra QRs are also required for the complete interconnection of two end-nodes separated by a distance $L$. Even though the overall number of resources increases, since the distance between nodes shortens, the propagation losses, that scale exponentially with the distance, will also be reduced within elementary links.
     
     \subsubsection{Summary and Protocol}
     
     Independently of the network configuration chosen, synchronism between ARCs and QRs at its edges must be enforced for the proper operation of the proposed network. Configuration A allows longer distances to be covered with fewer resources, but might become impractical depending on the parameter regime that is considered, with a special dependence on $\eta^{\text{AFC}}$. Configuration B makes use of more resources, but might be more practical for a near-term implementation since the propagation losses are diminished within an elementary link. Either way, it is clear that, depending on the specification and constraints, one configuration may achieve higher entanglement distribution rates than the other. This can lead to interesting optimization problems, which, although not posed and solved here, are addressed in Section \ref{sect:IV}, where numerical results are presented. We summarize the advantages and drawbacks of each proposed configurations in Table \ref{table:synchronism}.
     
     \begin{table}[!htb]
     \centering
     \caption{Configurations A and B Summary} \label{table:synchronism}
     \setlength\tabcolsep{5pt}
     \footnotesize\centering
     \smallskip 
     \begin{tabular}{c|c|c}
     Parameters & Configuration A & Configuration B\\
     \hline
     \hline
      \# QMs & $4n-2$ & $2\xi n$\\
      \hline
      \# QRs & \makecell{Depends on ARC length} & $\xi n -1$ required\\
      \hline
      Losses & \makecell{Potentially higher, both due\\ to a longer elementary link length\\and due to an extra $\left(\eta^{\text{AFC}}\right)^{\left(2(n-1)\right)}$\\efficiency factor.} & \makecell{Potentially smaller, since elementary\\ link length is shorter, even though\\more resources are required.}\\
     \hline
     Fidelity & \makecell{Potentially higher, since decoherence\\is limited for photonic qubits\\ propagating in the fiber.} & \makecell{Potentially smaller, as the number\\of devices between two nodes of\\the network increases.}\\
     \hline
     \end{tabular}
     \end{table}
     Consider, now, a synchronous network employing either configuration as detailed above. Under these conditions, we can introduce a parameter $\tau$ corresponding to a time interval within which entanglement must be distributed to the end nodes. This parameter, named the \textit{cut-off time}, is essential for the numerical analysis of the rates achieved in the proposed network. $\tau$ can be formally defined once all the steps necessary for entanglement distribution have been considered, which can be subdivided into two parts: in the first one, all the elementary links within the ARCs generate entanglement attempting to cover each ARC's length, which, ultimately, culminates in the storage inside a QR's memory; in the second, QRs perform entanglement swapping operations to deliver entanglement to any pair of nodes in the network. Due to the delays for synchronization, transferring the quantum states to the QR's internal memory, and for the subsequent entanglement swapping operations, the value of $\tau$ is limited from below by $t^{\arc} + t^{^{13}C} + t^{\text{CNOT}}$. On the other hand, due to decoherence during storage, quantum states cannot be stored for longer than $t^{\text{NV}}$. Hence, $\tau$ is also limited from above by $t^{\text{NV}}$, such that the following regime is imposed:
     \begin{equation}
        \label{eq:tau_lower}
        t^{\text{NV}} \geq \tau \geq t^{\arc} + t^{^{13}C} + t^{\text{CNOT}}.
     \end{equation}
    
     The relevance of the parameter $\tau$ becomes clear when one takes into account the synchronous operation of the multi-platform network: the long lifetime of the $^{13}$C memory allows the QRs to store states that have been successfully distributed while entanglement distribution operations are taking place across the multi-platform network; this leads to an increase in end-to-end entanglement distribution due to the combination of spectral (within the ARCs) and temporal (between QRs) multiplexing. In this context, if the quantum states are stored for too long, the fidelity of the states will diminish due to decoherence. Conversely, if all the routers perform a CNOT-based BSM -- leading to an entanglement swapping operation -- too soon, the internal memories might not yet be loaded. To avoid issue and simplify the operation of the network, all of the intermediate quantum routers can be set to perform entanglement swapping operations after $\tau$ seconds. Therefore, choosing the value of $\tau$ has a great impact on the operation of the network, which is analyzed in the next section for the different network configurations herewith proposed.
    
     To conclude this section, we define the entanglement generation protocol over the proposed network, i.e., the buffered-router-assisted automated repeater chain, or \textbf{ARC-R}: 
     \begin{itemize}
         \item According to the network operator's master clock, the elementary links that compose the ARCs attempt entanglement distribution through frequency-multiplexed entanglement swapping.
         \item Based on local elementary link information, the FFSMM operation is performed upon states stored in the QMs so that concatenation of elementary links is achieved within an ARC.
         \item After information of a successful entanglement distribution within an ARC reaches its edges, respective quantum states are stored in the buffers.
         \item The QRs receive the stored quantum states, when available, after the quantum state transfer chain and store them in their internal quantum memories.
         \item CNOT operations are performed within the QRs after a pre-determined time $\tau$, achieving ARC-R quantum connectivity and, thus, delivering entanglement to any set of the network's nodes.
     \end{itemize}.

     \section{Network Rate Analysis}\label{sect:IV}

     In order to provide a clear analysis of the parameter regimes and potential of the proposed network architecture, the entanglement distribution rate must be evaluated as a function of the total length between two end-nodes. The analysis focuses, first, on the features of an ARC that connects two adjacent QRs. For this, we start by considering only Configuration A for enforcing ARC synchronism; we also analyze a well-established repeater chain that makes use of a single platform as a comparison reference. Building up on these results, it is possible to define entanglement distribution rates over an ARC-R of arbitrary length. Then, we introduce comparative analyses to prove that the inclusion of the buffer is beneficial in the proposed network. Configurations A and B are also compared. For all cases, near- and long-term results are presented.
     
     \subsection{Automated Repeater Chain}
     
     In the initial ARC analysis, the adjacent QRs are connected by means of an ARC of fixed elementary link length and extra QMs for synchronism. This way, the elementary link length matches the maximum storage time permitted by the QMs, i.e., $\ell=t^{\text{AFC}}_{\text{max}}\tfrac{c}{n_r}$. We can, now, reintroduce the parameter $P^{\text{AFC}}_{\text{gen}}$, that encapsulates the probability of generation of entanglement within an elementary link, and is such that:
     \begin{equation}
       P^{\text{AFC}}_{\text{gen}}=\left(1-\left[1-e^{-\alpha\ell}\eta^{\text{BSM}}\left(\eta^{\text{DET}}\right)^2\right]^{\Gamma_f}\right)\times \left(\eta^{\text{AFC}}\eta^{\text{shift}}\right)^2
       \label{eq:P_elemLink}
     \end{equation}
     This allows us to derive the following \textbf{Lemma \ref{lemma:I}}, which determines the entanglement distribution rate between two adjacent QRs through the proposed ARC depicted in Fig. \ref{fig:sync_solAB}:
    
     \begin{lemma}
     \label{lemma:I}
        The EPR-pair generation rate per second between two quantum routers that are separated by an AFC-based ARC with n elementary links is given by
        \begin{equation}
        R^{\text{ARC}} \geq \omega_{\text{try}}^{\text{EPPS}}\left(\eta^{\text{QR}}\right)^2\left(P^{\text{AFC}}_{\text{gen}}\right)^n\left(\eta^{\text{BSM}}\right)^{n-1},
        \end{equation}
        where
        \begin{equation*}
        \omega_{\text{try}}^{\text{EPPS}} := \eta^{\text{EPPS}}R^{\text{EPPS}}.
        \end{equation*}
     \end{lemma}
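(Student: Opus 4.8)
The plan is to decompose the end-to-end entanglement distribution between two adjacent quantum routers into a product of independent success events, each contributing a multiplicative factor to the rate. The underlying logic is the standard one for heralded repeater chains: since the entire protocol is gated by the EPPS clock, the rate is the attempt rate multiplied by the overall per-attempt success probability. First I would identify the fundamental attempt frequency. Each elementary link fires at the EPPS emission rate $R^{\text{EPPS}}$, but only a fraction $\eta^{\text{EPPS}}$ of these pump events yield a usable entangled photon pair; hence the effective try rate is $\omega_{\text{try}}^{\text{EPPS}} := \eta^{\text{EPPS}}R^{\text{EPPS}}$, which sets the leading prefactor.

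Next I would account for the probability that \emph{every} elementary link in the chain successfully distributes entanglement between its flanking QMs. By the definition in Eq.~\eqref{eq:P_elemLink}, a single link succeeds with probability $P^{\text{AFC}}_{\text{gen}}$, already incorporating the frequency-multiplexing gain (the $\Gamma_f$-fold ``at least one spectral mode survives'' factor), the fiber transmission $e^{-\alpha\ell}$, the linear-optic BSM and detector efficiencies, and the two AFC storage/shift operations. Treating the $n$ links as independent, the joint per-attempt success probability of the first stage is $\left(P^{\text{AFC}}_{\text{gen}}\right)^n$. I would then add the factor for concatenating these links: interconnection of the $n$ links requires $n-1$ additional entanglement-swapping operations inside the ARC, each a linear-optic BSM succeeding with probability $\eta^{\text{BSM}}$, yielding $\left(\eta^{\text{BSM}}\right)^{n-1}$. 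Finally, once the edge QMs hold the distributed state, each of the two edge states must be transferred into a router through the full buffer-to-$^{13}$C chain, each transfer succeeding with probability $\eta^{\text{QR}}$ as abstracted in Eq.~\eqref{eq:pbuf_sync}; since there are two edges (one per adjacent QR), this contributes $\left(\eta^{\text{QR}}\right)^2$. Multiplying the try rate by all these independent success factors gives exactly the claimed expression.

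The main obstacle, and the reason the statement is phrased as a lower bound ($\geq$) rather than an equality, is the independence assumption and the treatment of synchronism. In Configuration A the extra $2(n-1)$ edge QMs store and re-emit photonic qubits so that the ARC-wide heralding information arrives simultaneously with the corresponding retrieval; the associated $\left(\eta^{\text{AFC}}\right)^{n-1}$ per-edge penalty has already been folded into $\eta^{\text{QR}}$ via Eq.~\eqref{eq:pbuf_sync}, so no separate factor appears. The subtlety I would have to argue carefully is that the temporal and spectral multiplexing across attempts only \emph{increases} the effective rate relative to the single-shot product computed above: because many temporal modes are emitted and at least one successful mode can be buffered and promoted, the true achievable rate is at least the naive product of single-attempt probabilities scaled by the try rate. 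Establishing that this product is a valid lower bound — rather than over- or under-counting correlated failures introduced by the cut-off time $\tau$ and the finite storage constraints of Eq.~\eqref{eq:tau_lower} — is where the care lies; the bound follows by noting that discarding all but one successful configuration per cycle can only reduce the counted rate, so the product form is conservative.
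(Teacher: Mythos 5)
Your proposal is correct and follows essentially the same route as the paper: the rate is the effective attempt frequency $\omega_{\text{try}}^{\text{EPPS}}=\eta^{\text{EPPS}}R^{\text{EPPS}}$ multiplied by the per-attempt success probability, decomposed as $\left(P^{\text{AFC}}_{\text{gen}}\right)^n$ for the $n$ elementary links, $\left(\eta^{\text{BSM}}\right)^{n-1}$ for the intermediate swaps, and $\left(\eta^{\text{QR}}\right)^2$ for the two edge transfers into the routers, with the synchronism penalty already absorbed into $\eta^{\text{QR}}$ via Eq.~\ref{eq:pbuf_sync}. Your closing discussion of why the product form is a conservative lower bound is a reasonable elaboration the paper leaves implicit, but it does not change the argument.
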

     
     \begin{proof}
     For the AFC-based ARC, the frequency of the EPPS is $R^{\text{EPPS}}$ per second and its efficiency is $\eta^{\text{EPPS}}$. Hence, the total number of effective attempts per second is $\omega_{\text{try}}^{\text{EPPS}} = \eta^{\text{EPPS}}R^{\text{EPPS}}$. For generating an EPR-pair between two nodes, separated by an AFC-based ARC with $n$ elementary links, all of the $n$ elementary EPR-pairs have to be created successfully and all of the entanglement swap operations have to be performed successfully in a single attempt. Moreover, within the same attempt, the photons need to be transferred into the NV-based QRs. Probability of a successful creation of an elementary EPR-pair per attempt is given by $P^{\text{AFC}}_{\text{gen}}$ and the success probability of the entanglement swap operation is $\eta^{\text{BSM}}$; from Eq. \ref{eq:pbuf_sync}, we get that the efficiency of transferring the states from the AFC memory into the QR's internal memory is $\eta^{\text{QR}}$. Hence, the probability $P_{\text{gen}}^{\text{ARC}}\left(n\right)$ of generating an EPR-pair (per attempt) between two nodes which are separated by an ARC with $n$ elementary links is given by:
     \begin{equation}
         P_{\text{gen}}^{\text{ARC}}\left(n\right) = \left(\eta^{\text{QR}}\right)^2\left(P^{\text{AFC}}_{\text{gen}}\right)^n\left(\eta^{\text{BSM}}\right)^{n-1},
         \label{eq:lemma1_prob}
     \end{equation}
     where $\omega^{\text{EPPS}}_{\text{try}}$ attempts per second are possible. This implies that the EPR-pair generation rate is given by
     \begin{equation}
         R^{\text{ARC}} \geq \omega^{\text{EPPS}}_{\text{try}}\left(\eta^{\text{QR}}\right)^2\left(P^{\text{AFC}}_{\text{gen}}\right)^n\left(\eta^{\text{BSM}}\right)^{n-1},
     \end{equation}
     which concludes the proof.
    \end{proof}
    
     This result is central for the development of a generic node-to-node entanglement distribution analysis, as it provides a simple lower bound of the expected entanglement distribution rates between two adjacent QRs in the proposed ARC-R network. It is important to recall that both configurations A and B can be analyzed using this result by replacing the expression of $\eta^{\text{QR}}$ from Eq. \ref{eq:pbuf} to Eq. \ref{eq:pbuf_sync}.
     
     At this point, a comparison can be drawn, in terms of entanglement distribution rates, between two adjacent QRs in two different network architectures: the ARC-R proposed here and determined in \textbf{Lemma \ref{lemma:II}}; and an equivalent automated repeated chain based on NV nodes. The goal of this comparison is to determine whether a hybrid network such as the proposed ARC-R, that makes use of frequency-multiplexing to achieve higher rates and buffered-routers, can overcome a network composed of a single constituent; the NV has been chosen for this comparison since recent experimental demonstrations \cite{tchebotareva2019entanglement} set it as one of the most prominent candidates for the composition of a near-term quantum network. The architecture of such an NV chain is presented in Fig. \ref{fig:NVbased_ARC}, where, for comparison sake, the elementary link length is considered to be the same as in the ARC-R case, i.e., $\ell=t^{\text{AFC}}_{\text{max}}\tfrac{c}{n_r}$.

    \begin{figure}[h]
	 \centering
	 \includegraphics[width=0.8\linewidth]{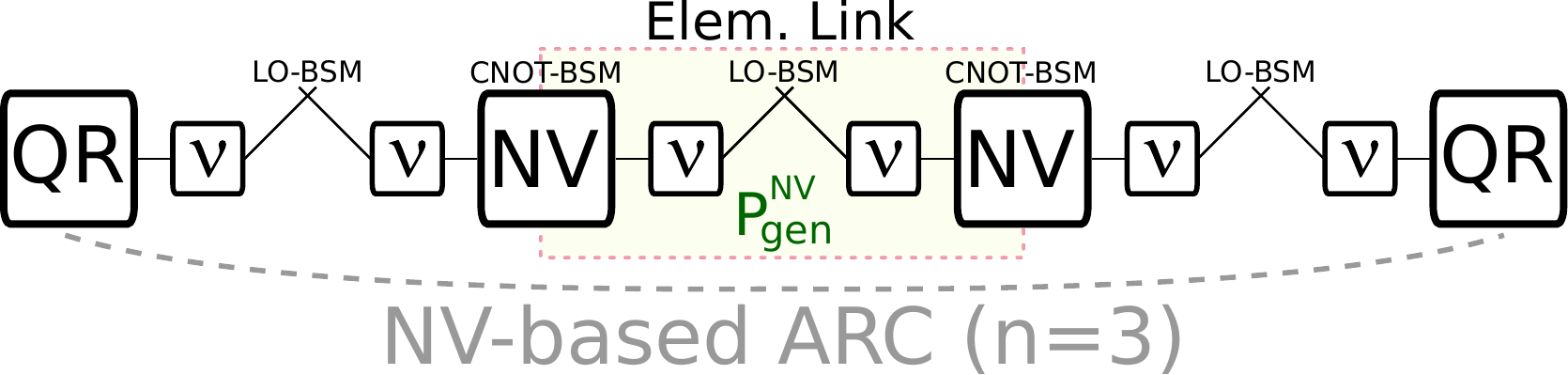}
	 \caption{Pictorial representation of similar network architecture as in Fig. \ref{fig:AFC-basedARC}, but composed only of NV nodes. $n=3$ concatenated elementary links are represented.}
	 \label{fig:NVbased_ARC}
     \end{figure}
    
     An important note is that, following Fig. \ref{fig:NVbased_ARC}, the routers are, themselves, implemented by NV devices, eliminating the requirement of the quantum state transfer step discussed in Section \ref{sect:II}; however, the photons emitted by the NVs must be frequency-converted to a wavelength suitable for long-distance optical fiber transmission \cite{tchebotareva2019entanglement}. We denote the efficiency of such conversion as $\eta^{\text{QFC}}_{\text{1588}}$, according to Table \ref{tab:params1}, and point out that the elementary links are concatenated not by a linear-optics-based BSM, but by a deterministic BSM through a CNOT gate within the NVs ($\eta^{\text{CNOT}}=1$). Analogously, a parameter that encapsulates the entanglement generation probability within an NV-based elementary link can be defined:
     \begin{equation}
        P_\text{gen}^{\text{NV}}=\left(1-\left[1-\left(\eta^{\text{QFC}}_{\text{1588}}\right)^2e^{-\alpha \ell}\eta^{\text{BSM}}\right]^{\Gamma_t}\right).
     \end{equation}
     Furthermore, due to the time-multiplexed feature of this network architecture, the parameter $\tau^{\text{NV}}$, analogous to $\tau$ defined in Eq. \ref{eq:tau_lower}, is introduced. Here, we introduce a confidence parameter $0 \leq \epsilon \leq 1$ and choose $\tau^{\text{NV}}$ such that the probability $P^{\text{NV-C}}_{\text{gen}}\left(n\right)$ of generating an EPR-pair between two end nodes of an NV-chain of length $n$, such as the one depicted in Fig. \ref{fig:NVbased_ARC}, is at least $1-\epsilon$. The exact expression of $\tau^{\text{NV}}$ is a by-product of \textbf{Lemma \ref{lemma:II}}, which determines the entanglement distribution rate of an NV-chain:
    
     \begin{lemma}
     \label{lemma:II}
        The EPR-pair generation rate per second between two nodes that are separated by an NV-chain with n elementary links is given by
        \begin{equation}
        R^{\text{NV-C}} \geq
        \frac{1}{\tau^{\text{NV}}}\left(1-\left(1-P_{\text{gen}}^{\text{NV}}\right)^{\omega_{\text{try}}^{\text{NV}}\left(\tfrac{\tau^{\text{NV}}}{2} - \tilde t^{\text{trans}}\right)}\right)^n,
        \end{equation}
        where
        \begin{equation*}
        \begin{split}
        \tilde t^{\text{trans}} &:= t^{^{\text{13}}C}+t^{\text{CNOT}},\\
        \omega_{\text{try}}^{\text{NV}}&:=\frac{c}{\ell n_r}.
        \end{split}
        \end{equation*}
     \end{lemma}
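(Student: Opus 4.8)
The plan is to mirror the counting argument of Lemma \ref{lemma:I} but adapt it to the time-multiplexed operation of the NV chain, where within a single cut-off window each elementary link makes many independent heralding attempts rather than a single one. First I would fix the elementary-link clock: since a heralded-entanglement attempt across a link of length $\ell$ is gated by the photon transmission time $t^{\text{RT}} = \ell n_r/c$, the attempt rate is its inverse, $\omega_{\text{try}}^{\text{NV}} = c/(\ell n_r)$, exactly as defined in the statement. Each such attempt succeeds with probability $P_{\text{gen}}^{\text{NV}}$ given by the displayed expression, which already incorporates the two telecom frequency conversions $\eta^{\text{QFC}}_{1588}$, the fiber loss $e^{-\alpha\ell}$, the BSM efficiency, and the temporal-mode multiplicity $\Gamma_t$.

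Next I would determine how many attempts fit inside one cut-off period $\tau^{\text{NV}}$. The usable generation time per link is not the full $\tau^{\text{NV}}$: a portion $\tilde t^{\text{trans}} = t^{^{\text{13}}C} + t^{\text{CNOT}}$ is consumed by writing the heralded electron-spin state onto the long-lived $^{13}$C register and executing the deterministic CNOT-based swap, while the factor $1/2$ reflects the allocation of the window between the elementary-generation phase and the swap/propagation phase of the protocol. The effective number of independent attempts per link is therefore $N := \omega_{\text{try}}^{\text{NV}}(\tau^{\text{NV}}/2 - \tilde t^{\text{trans}})$, so a given link fails on every attempt with probability $(1 - P_{\text{gen}}^{\text{NV}})^N$ and thus delivers a heralded pair within the window with probability $1 - (1 - P_{\text{gen}}^{\text{NV}})^N$.

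To assemble the chain I would invoke that, because the NV nodes store each heralded state in the $^{13}$C memory for essentially the whole window and the swaps are deterministic ($\eta^{\text{CNOT}} = 1$), the $n$ elementary links act as independent Bernoulli trials, all of which must succeed for an end-to-end pair; this yields success probability $(1 - (1 - P_{\text{gen}}^{\text{NV}})^N)^n$ per cut-off period of duration $\tau^{\text{NV}}$, and dividing by $\tau^{\text{NV}}$ produces the claimed rate. The inequality rather than equality then absorbs the flooring of the real attempt count $N$ to an integer and the neglect of higher-order delivery events. I expect the main obstacle to be the rigorous justification of the effective window $\tau^{\text{NV}}/2 - \tilde t^{\text{trans}}$ — in particular the origin of the factor $1/2$ and the assumption that storage decoherence over the window is negligible, so that the per-link successes may be treated as independent and simultaneously valid at the instant the swaps are performed.
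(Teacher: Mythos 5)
Your proposal follows essentially the same counting argument as the paper: attempt rate $\omega_{\text{try}}^{\text{NV}}=c/(\ell n_r)$ set by the heralding round trip, per-attempt success $P_{\text{gen}}^{\text{NV}}$, a geometric "at least one success" probability per link over the usable window, independence across the $n$ links, and division by $\tau^{\text{NV}}$.

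The one point you flag as uncertain — the origin of the factor $1/2$ — is the one place your rationale diverges from the paper's, and your guess is not the right one. It is not a split between a generation phase and a swap/propagation phase (the swap overhead is already accounted for separately by subtracting $\tilde t^{\text{trans}}=t^{^{13}C}+t^{\text{CNOT}}$). Rather, the paper's reason is that each intermediate NV node possesses only a \emph{single communication qubit}, so it cannot attempt entanglement generation with its left and right neighbours simultaneously; it must alternate between the two adjacent elementary links, leaving each link only half of the cut-off window, i.e.\ a usable duration $\tau^{\text{NV}}/2-\tilde t^{\text{trans}}$ per link. With that correction your argument matches the paper's proof; the paper also adds one detail you omit, namely that $\tau^{\text{NV}}$ is fixed by a confidence parameter $\epsilon$ via Eq.~\ref{eq:tau_nv} and capped at $t^{\text{NV}}$ when that expression exceeds the NV storage time.
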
	
	
     \begin{proof}
     In the NV-chain, each of the intermediate repeater nodes sends a photon to the remote BSM station and waits for the heralded signal to arrive; another attempt starts after this waiting period. If the length of the elementary link is $\ell$, then the waiting time between two successive attempts is at least $t^{\text{AFC}}_{\text{max}}$ seconds, since $\ell = t^{\text{AFC}}_{\text{max}}\tfrac{c}{n_r}$; hence, the total number of attempts per second is $\omega_{\text{try}}^{\text{NV}}=\frac{c}{\ell n_r}$. Note that, due to the availability of a single communication qubit, an NV-based intermediate quantum repeater in an ARC can only attempt to generate EPR-pairs with one of its neighbours at a time.
     
     According to the EPR-pair distribution protocol for the NV-chain, all of the intermediate repeaters perform the CNOT-based entanglement swapping operations after $\tau^{\text{NV}}$ seconds. During that time period, all of the intermediate repeaters must successfully create EPR-pairs, store them inside the $^{13}\mathrm{C}$ internal memory, and perform the CNOT operations to close the entanglement swapping chain. The time necessary to transfer the generated EPR-pairs to the $^{13}\mathrm{C}$ memory is $t^{13C}$ seconds and the time to perform the entanglement swapping operation is $t^{\text{CNOT}}$ seconds. Hence, among the $\tau^{\text{NV}}$ seconds, the ARC should reserve $\tilde t^{\text{trans}} := t^{^{13}C} + t^{\text{CNOT}}$ seconds for transferring the qubits and for entanglement swapping operations. Also due to the availability of a single communication qubit, the intermediate nodes alternate between the links connected to its either sides, leaving a total duration of $\left(\tfrac{\tau^{\text{NV}}}{2} - \tilde t^{\text{trans}}\right)$ to attempt EPR-pair generation per node per elementary link. During this period, the nodes can attempt $\omega_{\text{try}}^{\text{NV}}\left(\tfrac{\tau^{\text{NV}}}{2} - \tilde t^{\text{trans}}\right)$ times the creation of one EPR-pair, where the success probability per attempt is given by $P_{\text{gen}}^{\text{NV}}$. This implies that the probability of generating at least one EPR-pair during the time window $\tau^{\text{NV}}$, per elementary link, is 
     \begin{equation}
     P^{\text{NV-C}}_{\text{gen}}\left(n\right)=\left(1-\left(1-P_{\text{gen}}^{\text{NV}}\right)^{\omega_{\text{try}}^{\text{NV}}\left(\tfrac{\tau^{\text{NV}}}{2} - \tilde t^{\text{trans}}\right)}\right)^n.    
     \end{equation}
     From this expression, it is possible to extract a close form for $\tau^{\text{NV}}$, as follows:
     \begin{equation}
        \label{eq:tau_nv}
        \tau^{\text{NV}} = 
        \left(\frac{2}{\omega^{\text{NV}}_{\text{try}}} \frac{\log \left(1-(1-\epsilon)^{\frac{1}{n}}\right)}{\log \left(1-P^{\text{NV}}_{\text{gen}}\right)} + \tilde{t}^{\text{trans}}\right).
    \end{equation}
    Since $\tau^{\text{NV}}$ is upper-bounded by $t^{\text{NV}}$ according to Eq. \ref{eq:tau_lower}, in case the expression in Eq. \ref{eq:tau_nv} exceeds $t^{\text{NV}}$, we set $\tau^{\text{NV}} = t^{\text{NV}}$.
    
    There are, in total, $\frac{1}{\tau^{\text{NV}}}$ time windows of duration $\tau^{\text{NV}}$ per second. Hence, the expected EPR-pair generation rate is given by:
     \begin{equation}
     R^{\text{NV-C}} \geq \frac{1}{\tau^{\text{NV}}}\left(1-\left(1-P_{\text{gen}}^{\text{NV}}\right)^{\omega_{\text{try}}^{\text{NV}}\left(\tfrac{\tau^{\text{NV}}}{2} - \tilde t^{\text{trans}}\right)}\right)^n,
     \end{equation}
    which concludes the proof.
    \end{proof}
     
    In possession of \textbf{Lemmas \ref{lemma:I}} and \textbf{\ref{lemma:II}}, it is possible to, now, compare the entanglement distribution rates between two adjacent QRs connected either by the AFC-based ARC (\textbf{Lemma \ref{lemma:I}}) or the NV-chain (\textbf{Lemma \ref{lemma:II}}). For this analysis, the elementary link lengths are chosen to be the same and determined by the maximum QM storage time ($\ell=t^{\text{AFC}}_{\text{max}}\tfrac{c}{n_r}$), which correspond to 20 and 60km for near- and long-term parameters, respectively; in order to simplify the result visualization, ideal parameters are not considered. The results of the comparison, in terms of achievable entanglement distribution rates, are depicted in Fig. \ref{fig:NV vs AFC} as a function of the number of concatenated elementary links forming the repeater chain. For these and further results, the confidence parameter is set as $\epsilon = 0.05$, i.e., the time before a CNOT operation is performed within the nodes is such that an EPR-pair has been successfully distributed by the elementary link with 95\% certainty.
     
    \begin{figure}[ht]
    	\centering
        \includegraphics[width=0.8\linewidth]{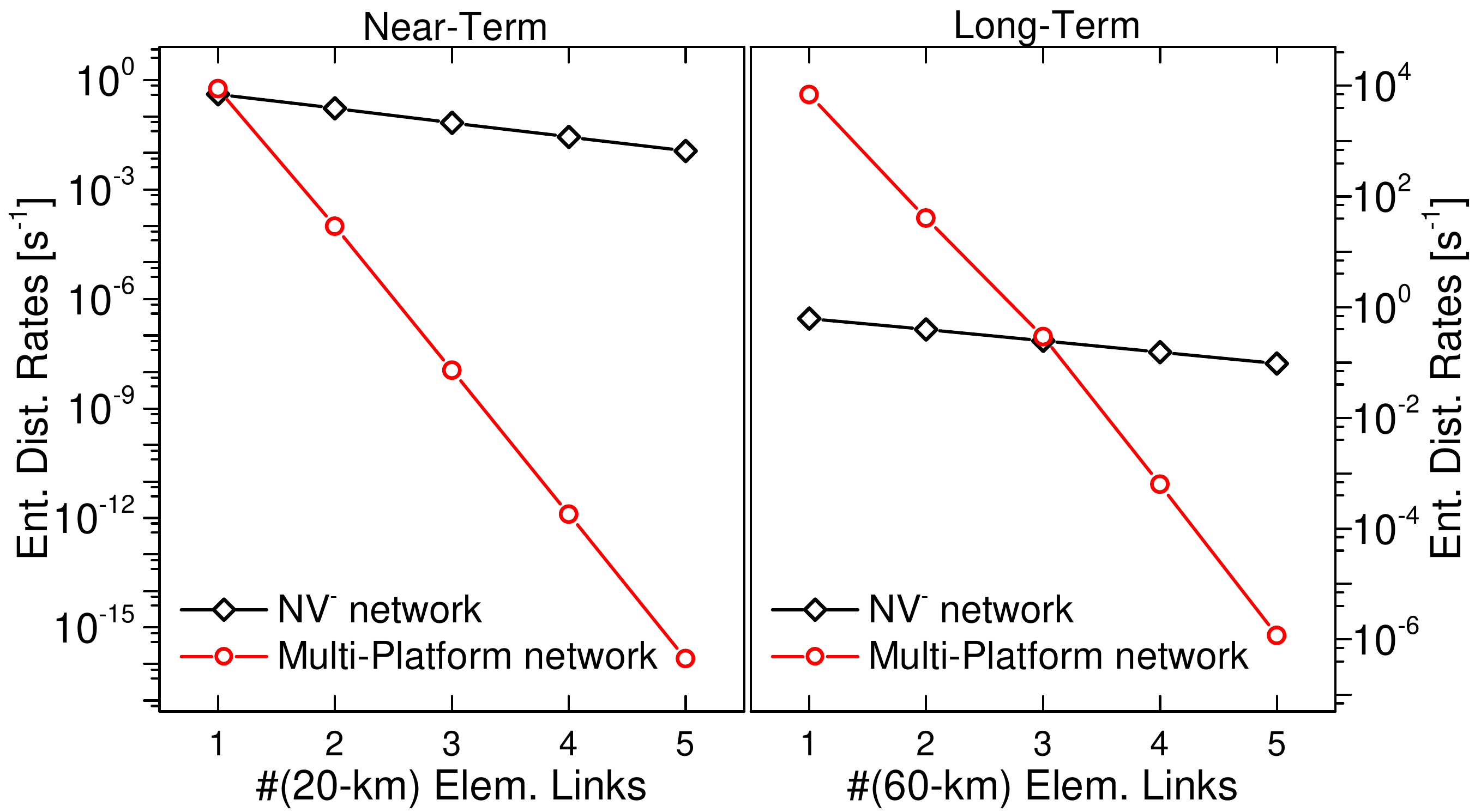}
	    \caption{Entanglement distribution rate across a homogeneous NV$^{-}$-base network and a multi-platform network versus number of concatenated elementary links using near- and long-term parameters.}
        \label{fig:NV vs AFC}
    \end{figure}
    
    The results of Fig. \ref{fig:NV vs AFC} indicate that the scaling of entanglement distribution rates, for both near- and long-term parameters, as the number of concatenated elementary links increases, is worse for the proposed multi-platform network. This analysis is interesting since it demonstrates that the boost in EPR-pair generation induced by the frequency-multiplexing capacity of the AFC-based ARC cannot completely compensate for the losses as the ARC length increases. On the other hand, this boost seems to provide a positive offset to the rate and, thus, make the multi-platform network several orders of magnitude more viable in case the number of concatenated elementary links between two QRs is kept within a certain number. This behavior stems from the fact that the large number of distinct spectral modes utilized in the AFC-based ARC mitigates the inefficiency imposed by the non-deterministic entanglement swapping operations in the linear-optics-based BSM within, but not between, elementary links; such probabilistic operations become a more dominant factor as the number of concatenated elementary links increases, resulting in worse scaling when compared to the NV-chain, which performs near-deterministic entanglement swapping through a CNOT gate. It is important to note that these results are highly dependent on the parameters of Table \ref{tab:params1}, which have been chosen based on current literature parameters and projections for the future of the physical systems that constitute the multi-platform network: changes to these parameters will result in different behaviors. We point this out since these results will be used henceforth to impose constraints on the subsequent analysis.
     
    For both the near- and long-term parameters considered in the analysis of Fig. \ref{fig:NV vs AFC}, the multi-platform network outperforms the NV-chain for a single elementary link ARC; for long-term parameters, this regime extends up to three elementary links. This is an important result that will impact on the analysis of the next section, where multiple ARCs are connected through QRs to compose the ARC-R. We, thus, fix the number $n$ of elementary links that will be analyzed in the ARC-R case based on the results of Fig. \ref{fig:NV vs AFC} as one ($n=1$) and two ($n=2$) for near- and long-term parameters, respectively. This translates into QRs spaced by 20km for the near-term multi-platform network, while for a  long-term ARC-R, the QRs would be spaced by 120km, when considering configuration A. Such distances are within the values expected for future quantum networks spanning continents \cite{wehner2018Qinternet}. Furthermore, the choice for long-term parameters is made based on a compromise between longer ARCs and higher rates: by choosing 2 elementary links over 3, the ARCs reach 120 instead of 180 km, but achieve over two orders of magnitude higher entanglement distribution rates. In terms of resources required by a potential ARC-R, it is easy to see that the overall number of QRs will depend directly on the length of the ARCs and, while going from 60 to 120km ($n=1$ to $n=2$) halves the number of required QRs, going from 120 to 180km ($n=2$ to $n=3$) reduces the number of required QRs by a mere third.
    
    \subsection{Buffered-Router-Assisted Automated Repeater Chain}

    The central result of this Section is the node-to-node entanglement distribution rates mediated by the ARC-R. One interesting observation is the fact that, differently from \textbf{Lemma \ref{lemma:I}} -- where the cut-off time was not required to compute the rates --, but similarly to \textbf{Lemma \ref{lemma:II}}, $\tau$ plays an important role in the estimation of the entanglement distribution rates for the ARC-R. As previously defined in Section \ref{sect:III}, the proposed entanglement generation protocol requires that all intermediate quantum routers (between the end-nodes) perform the entanglement swapping operations after a pre-defined $\tau$ seconds. As in \textbf{Lemma \ref{lemma:II}}, we introduce a confidence parameter $0 \leq \epsilon \leq 1$ and choose $\tau$ such that the probability $P_{\text{gen}}^{\text{ARC-R}}$ of generating an EPR-pair between two end nodes in an ARC-R of length $N$ is at least $1-\epsilon$. The expression of $P_{\text{gen}}^{\text{ARC-R}}$ and that of $\tau$ are both a by-product of \textbf{Lemma \ref{lemma:III}}, where we compute the achievable entanglement distribution rates between nodes connected through an ARC-R of varying length N, i.e., with N routers and N+1 ARCs.
    
    \begin{lemma}
    \label{lemma:III}
    In an ARC-R of length N, the EPR-pair generation rate per second between two end-nodes is given by
    \begin{equation}
    R^{\text{ARC-R}}\!\geq\!\frac{1}{\tau}\!\left(\!1\!-\!\left[1-P_{\text{gen}}^{\text{ARC}}\left(n\right)\right]^{\omega_{\text{try}}^{\text{EPPS}}\left(\tau- t^{\text{trans}}\right)}\right)^N,
    \end{equation}
    where 
    \begin{equation*}
        t^{\text{trans}} := (t^{^{13}C} + t^{\arc} + t^{\text{CNOT}}).
    \end{equation*}
    \end{lemma}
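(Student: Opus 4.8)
The plan is to follow the template of the proof of Lemma~\ref{lemma:II}, adapting it to the ARC-R in which each elementary unit is now a complete ARC rather than an NV elementary link. First I would set up the time-windowed operation: the protocol of Section~\ref{sect:III} fixes a cut-off time $\tau$ after which every intermediate router performs its CNOT-based swap, so there are $1/\tau$ disjoint windows per second. Within one window each constituent ARC attempts frequency-multiplexed entanglement distribution at the effective rate $\omega_{\text{try}}^{\text{EPPS}} = \eta^{\text{EPPS}}R^{\text{EPPS}}$ of Lemma~\ref{lemma:I}, with per-attempt success probability $P_{\text{gen}}^{\text{ARC}}(n)$ given in Eq.~\ref{eq:lemma1_prob}.

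Next I would subtract the overhead consumed inside a window. Unlike an NV elementary link, an ARC can only signal a successful end-to-end swap once the full ARC information has propagated to its edges, which takes $t^{\arc}$; in addition the router must map its held state into the $^{13}$C memory ($t^{^{13}C}$) and eventually close the chain with a CNOT ($t^{\text{CNOT}}$). Reserving $t^{\text{trans}} := t^{^{13}C} + t^{\arc} + t^{\text{CNOT}}$ leaves $\tau - t^{\text{trans}}$ seconds for generation, i.e. $\omega_{\text{try}}^{\text{EPPS}}(\tau - t^{\text{trans}})$ attempts per ARC, so a given ARC succeeds at least once in the window with probability $1 - [1 - P_{\text{gen}}^{\text{ARC}}(n)]^{\omega_{\text{try}}^{\text{EPPS}}(\tau - t^{\text{trans}})}$.

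The step I expect to require the most care is justifying the absence of the factor $1/2$ that halved the attempt budget in Lemma~\ref{lemma:II}. There, each NV node owned a single communication qubit and had to time-share between its two adjacent links; here the buffer placed between every ARC and router (Fig.~\ref{fig:RouterARC_connect}) stores each heralded edge state on demand, so the two ARCs flanking a router never compete for the router's loading window and all ARCs may attempt generation in parallel. Making this precise -- that buffering decouples the ARCs and renders their per-window successes independent -- is exactly what licenses both the full budget $\omega_{\text{try}}^{\text{EPPS}}(\tau - t^{\text{trans}})$ per ARC and the subsequent multiplication of single-ARC probabilities.

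Finally, with the ARCs independent, an end-to-end pair is delivered in a window only if every ARC succeeds and the routers then perform their deterministic swaps; raising the single-ARC probability to the $N$-th power and multiplying by the $1/\tau$ windows per second gives
\begin{equation*}
R^{\text{ARC-R}} \geq \frac{1}{\tau}\left(1 - \left[1 - P_{\text{gen}}^{\text{ARC}}(n)\right]^{\omega_{\text{try}}^{\text{EPPS}}(\tau - t^{\text{trans}})}\right)^N.
\end{equation*}
The bound is an inequality because at most one delivered pair per window is counted, discarding extra temporal-mode successes. As in Lemma~\ref{lemma:II}, the closed form for $\tau$ then follows by imposing $P_{\text{gen}}^{\text{ARC-R}} \geq 1-\epsilon$ and inverting, capping $\tau$ at $t^{\text{NV}}$ per Eq.~\ref{eq:tau_lower}.
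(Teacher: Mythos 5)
Your proposal follows essentially the same route as the paper's proof: partition time into $1/\tau$ windows, reserve $t^{\text{trans}} = t^{^{13}C} + t^{\arc} + t^{\text{CNOT}}$ of each window, invoke the per-attempt probability $P_{\text{gen}}^{\text{ARC}}(n)$ from Lemma~\ref{lemma:I}, use the buffer to justify that the two ARCs flanking each router attempt in parallel (no factor of $1/2$), and raise the single-ARC success probability to the $N$-th power. The reasoning and the resulting bound match the paper's argument, including the derivation of the closed form for $\tau$ from the confidence parameter $\epsilon$.
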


    \begin{proof}
    In order for an EPR-pair to be created along an ARCR, all of the intermediate quantum routers need to create and perform swapping operations within $\tau$ seconds, since an intermediate quantum router stores the EPR-pair generated in an ARC for $\tau$ seconds. According to Eq. \ref{eq:tau_lower}, $\tau$ is lower bounded by the network delay that includes synchronization, transfer to the QR's internal memory, and entanglement swapping operations, a parameter condensed into $t^{\text{trans}} := (t^{^{13}C} + t^{\text{ARC}} + t^{\text{CNOT}})$.
    
    Due to the inclusion of a buffer with longer storage time than $t^{^{13}C}$, the two ARCs connected to a QR can simultaneously attempt to generate EPR-pairs. Hence, all of the ARC's can effectively attempt to distribute entanglement for a total duration $\tau- t^{\text{trans}}$ seconds, with a rate of effective attempts equal to $\omega_{\text{try}}^{\text{EPPS}}=\eta^{\text{EPPS}}R^{\text{EPPS}}$, totalling $\omega_{\text{try}}^{\text{EPPS}}\left(\tau- t^{\text{trans}}\right)$ attempts. We now invoke the result of \textbf{Lemma \ref{lemma:I}}, in the form of Eq. \ref{eq:lemma1_prob}, to write the probability of generating at least one EPR-pair between two QR's connected through an ARC of length $n$ given $\omega_{\text{try}}^{\text{EPPS}}\left(\tau- t^{\text{trans}}\right)$ attempts: $\left(1-\left[1-P_{\text{gen}}^{\text{ARC}}\left(n\right)\right]^{\omega_{\text{try}}^{\text{EPPS}}\left(\tau- t^{\text{trans}}\right)}\right)$.
    
    For two end nodes of the ARC-R to be able to share an EPR-pair, it is necessary for each intermediate router to share an EPR-pair with its neighbour following the above procedure; in an ARC-R of length N, this corresponds to a total of N such EPR-pairs. Hence, the probability of generating one end-to-end EPR-pair within a time-window $\tau$ is:
    \begin{equation}
        P_{\text{gen}}^{\text{ARC-R}}\!=\!\left(\!1\!-\!\left[1\!-\!P_{\text{gen}}^{\text{ARC}}\left(n\right)\right]^{\omega_{\text{try}}^{\text{EPPS}}\left(\tau- t^{\text{trans}}\right)}\right)^N.
    \end{equation}
    From this expression, it is possible to extract a close form for $\tau$, as follows:
     \begin{equation}
        \label{eq:tau_arcr}
        \tau = 
        \left(\frac{1}{\omega^{\text{EPPS}}_{\text{try}}} \frac{\log \left(1-(1-\epsilon)^{\frac{1}{N}}\right)}{\log\! \left[1-P^{\text{ARC}}_{\text{gen}}\left(n\right)\right]}\!+\!t^{\text{trans}}\right).
    \end{equation}
    Since $\tau$ is upper-bounded by $t^{\text{NV}}$ according to Eq. \ref{eq:tau_lower}, in case the expression in Eq. \ref{eq:tau_arcr} exceeds $t^{\text{NV}}$, we set $\tau = t^{\text{NV}}$.
    
    Since, in a second, there are $\frac{1}{\tau}$ time-windows $\tau$, the expected EPR-pair generation rate $R^{\text{ARC-R}}$ of a buffer-router-assisted automated repeater chain is:
    \begin{equation}
     R^{\text{ARC-R}}\!\geq\!\frac{1}{\tau}\!\left(\!1\!-\!\left[1-P_{\text{gen}}^{\text{ARC}}\left(n\right)\right]^{\omega_{\text{try}}^{\text{EPPS}}\left(\tau- t^{\text{trans}}\right)}\right)^N,
    \end{equation} 
    which concludes the proof. 
    \end{proof}
    
    It is interesting to note that \textbf{Lemma \ref{lemma:III}} is an amalgam of \textbf{Lemmas \ref{lemma:I}} and \textbf{\ref{lemma:II}}, with the difference that the EPR-pair generation rate between two QRs is given by $P_{\text{gen}}^{\text{ARC}}\left(n\right)$. As already observed in Fig. \ref{fig:NV vs AFC}, there is a regime for which better results can be achieved following the architecture of the proposed multi-platform network. Another feature of the multiplicity of platforms is the presence of the buffer, which has been considered in the proof of \textbf{Lemma \ref{lemma:III}}. Determining the rates for a similar network that does not make use of such buffer unit is important so that a fair comparison between the two can be performed, which is presented in \textbf{Lemma \ref{lemma:IV}}. It is noteworthy that, due to the lack of the buffer, the expression for $\tau$ will change in \textbf{Lemma \ref{lemma:IV}} with respect to that presented in \textbf{Lemma \ref{lemma:III}}. Since the two expressions are quite similar but not equal, $\tau'$ was chosen to represent the pre-determined time before a CNOT operation within the QRs for an ARC-R destitute of buffers.
    
    \begin{lemma}
    \label{lemma:IV}
    In an ARC-R of length N destitute of buffers, the EPR-pair generation rate per second between two end-nodes is given by
    \begin{equation}
    \hspace{-5pt}R_{\text{noBuff}}^{\text{ARC-R}}\!\geq\!\frac{1}{\tau'}\!\left(\!1\!-\!\left[\!1\!-\!\frac{P_{\text{gen}}^{\text{ARC}}\!\left(n\right)}{\left(\eta^{\text{BUFF}}\right)^2}\!\right]^{\omega_{\text{try}}^{\text{EPPS}}\!\left(\!\tfrac{\tau'}{2}\!-\!t^{\text{trans}}\!\right)}\!\right)^N.
    \end{equation}
    \end{lemma}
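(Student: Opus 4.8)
The plan is to mirror the derivation of \textbf{Lemma \ref{lemma:III}} while tracking the two structural changes that follow from deleting the buffer stage at each quantum router. First I would isolate the role the buffer played in the proof of \textbf{Lemma \ref{lemma:III}}: it was precisely the on-demand, long-lived storage of the buffer that let a router hold the state arriving from one adjacent ARC while simultaneously engaging the other, so that both links feeding a given router could run concurrently for the full duration $\tau - t^{\text{trans}}$. Absent the buffer, a router possesses only its single internal (electron-spin) communication qubit and is therefore forced to serve its two neighbouring ARCs one at a time. This is exactly the single-communication-qubit alternation already encountered in \textbf{Lemma \ref{lemma:II}}, and its consequence is identical: the effective attempt window available to each ARC is halved, so that $\tau - t^{\text{trans}}$ is replaced by $\tfrac{\tau'}{2} - t^{\text{trans}}$ in the exponent.

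Second, I would account for the efficiency factors that disappear when the buffer is removed. Recall that $\eta^{\text{QR}}$ in Eq. \ref{eq:pbuf_sync} carries a single factor of $\eta^{\text{BUFF}}$, and that the per-ARC generation probability $P_{\text{gen}}^{\text{ARC}}(n) = (\eta^{\text{QR}})^2 (P_{\text{gen}}^{\text{AFC}})^n (\eta^{\text{BSM}})^{n-1}$ of Eq. \ref{eq:lemma1_prob} contains $(\eta^{\text{QR}})^2$ because every ARC is bracketed by a router at each of its two edges. Deleting the buffer therefore strips one $\eta^{\text{BUFF}}$ from each of the two bracketing routers, i.e. a total factor of $(\eta^{\text{BUFF}})^2$. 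The correct replacement for the single-link success probability is thus $P_{\text{gen}}^{\text{ARC}}(n)/(\eta^{\text{BUFF}})^2$.

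With these two substitutions in hand, I would re-run the counting argument of \textbf{Lemma \ref{lemma:III}} essentially verbatim. The rate of effective attempts per second remains $\omega_{\text{try}}^{\text{EPPS}} = \eta^{\text{EPPS}} R^{\text{EPPS}}$; over the reduced window each ARC makes $\omega_{\text{try}}^{\text{EPPS}}(\tfrac{\tau'}{2} - t^{\text{trans}})$ attempts, each succeeding with probability $P_{\text{gen}}^{\text{ARC}}(n)/(\eta^{\text{BUFF}})^2$, so the probability that a given ARC yields at least one EPR-pair within the window is $1 - [1 - P_{\text{gen}}^{\text{ARC}}(n)/(\eta^{\text{BUFF}})^2]^{\omega_{\text{try}}^{\text{EPPS}}(\tfrac{\tau'}{2} - t^{\text{trans}})}$. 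Demanding that all $N$ links succeed and noting that there are $1/\tau'$ disjoint windows per second produces the claimed lower bound. As before, an explicit closed form for $\tau'$ follows by imposing the $(1-\epsilon)$-confidence condition on the $N$-fold product and solving for the exponent, capping $\tau'$ at $t^{\text{NV}}$ whenever the solution would exceed it.

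The step I expect to require the most care is the justification of the factor $\tfrac{1}{2}$: one must argue that, without any on-demand storage, the two router-loading operations genuinely cannot be pipelined and must be serialised, so that the alternation argument of \textbf{Lemma \ref{lemma:II}} transfers without modification. Once that operational claim is granted, the remainder is a direct transcription of the combinatorics of \textbf{Lemma \ref{lemma:III}} under the two substitutions above.
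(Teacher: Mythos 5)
Your proposal is correct and follows essentially the same route as the paper's own proof: serialising the two ARCs at each router halves the attempt window to $\tfrac{\tau'}{2}-t^{\text{trans}}$ (exactly the alternation argument of Lemma \ref{lemma:II}), and removing the buffer strips one $\eta^{\text{BUFF}}$ from each of the two $\eta^{\text{QR}}$ factors in $P_{\text{gen}}^{\text{ARC}}(n)$, giving the $(\eta^{\text{BUFF}})^{2}$ divisor, after which the counting of Lemma \ref{lemma:III} carries over verbatim. The derivation of $\tau'$ via the $(1-\epsilon)$ confidence condition, capped at $t^{\text{NV}}$, also matches the paper.
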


    \begin{proof}
    In an ARC-R destitute of buffers, the QRs can only communicate to one ARC at a time, i.e., the ARC-L and ARC-R, connected to the left and right of a QR, respectively, cannot generate EPR-pairs simultaneously. Hence, each of these ARCs can effectively use half of the $\tau'$ time-window that is available. Within that time, $t^{\text{trans}}$ seconds must be reserved, which yields a net attempt time-window of $\frac{\tau'}{2} - t^{\text{trans}}$ seconds, and, furthermore, a total number of attempts, within this time-window, of $\omega_{\text{try}}^{\text{EPPS}}\left(\frac{\tau'}{2} - t^{\text{trans}}\right)$. We now invoke the results of \textbf{Lemma \ref{lemma:I}} noticing that, in an ARC-R destitute of buffers, the buffer efficiency term $\eta^{\text{BUFF}}$ does not play a role. This has an impact on the expression of $P_{\text{gen}}^{\text{ARC}}\left(n\right)$, which can be compensated for by extracting, from the former, a factor $\left(\eta^{\text{BUFF}}\right)^2$. Therefore, the probability of generating at least one EPR-pair between two QRs connected through an ARC of length $n$ destitute of buffers given $\omega_{\text{try}}^{\text{EPPS}}\left(\tfrac{\tau'}{2}- t^{\text{trans}}\right)$ attempts is: $\left(1-\left[1-\frac{P_{\text{gen}}^{\text{ARC}}\left(n\right)}{\left(\eta^{\text{BUFF}}\right)^2}\right]^{\omega_{\text{try}}^{\text{EPPS}}\left(\tfrac{\tau'}{2}- t^{\text{trans}}\right)}\right)$. The same reasoning of \textbf{Lemma \ref{lemma:III}} can now be followed, yielding:
    \begin{equation}
        \label{eq:tau_arcr_noBuff}
        \tau' = 
        \left(\frac{2}{\omega^{\text{EPPS}}_{\text{try}}} \frac{\log \left(1-(1-\epsilon)^{\frac{1}{N}}\right)}{\log \left(1-\tfrac{P^{\text{ARC}}_{\text{gen}}\left(n\right)}{\left(\eta^{\text{BUFF}}\right)^2}\right)} + t^{\text{trans}}\right);
    \end{equation}
    and
    \begin{equation}
        \label{eq:rate_no_buff}
        \hspace{-5pt}R_{\text{noBuff}}^{\text{ARC-R}}\!\geq\!\frac{1}{\tau'}\!\left(\!1\!-\!\left[\!1\!-\!\frac{P_{\text{gen}}^{\text{ARC}}\!\left(n\right)}{\left(\eta^{\text{BUFF}}\right)^2}\!\right]^{\omega_{\text{try}}^{\text{EPPS}}\!\left(\!\tfrac{\tau'}{2}\!-\!t^{\text{trans}}\!\right)}\!\right)^N.
    \end{equation}
    This concludes the proof.
    \end{proof}
    
    The purpose of the multi-platform network architecture herewith proposed is to harness the resources available in each of the different physical systems. The set goal is achieved in the regime where the complexity introduced by the connectivity between multiple platforms allows for higher entanglement distribution rates. From \textbf{Lemmas \ref{lemma:I}} and \textbf{\ref{lemma:II}} and the results of Fig. \ref{fig:NV vs AFC}, such regime is identified in a simple scenario, with two QRs connected through either an AFC-based ARC or an NV-chain, for near- and long-term parameters as up to 1 and 2 frequency-multiplexed elementary links connecting the QRs, respectively. We extend this analysis in the context of \textbf{Lemmas \ref{lemma:III}} and \textbf{\ref{lemma:IV}} by fixing these numbers, i.e., ARCs of 1 and 2 elementary links for near- and long-term parameters, respectively, and evaluating the entanglement distribution rates now between two end-nodes of a one-dimensional network of increasing length. The results, presented in Fig. \ref{fig:bufferComparison}, show achievable rates for our proposed ARC-R (following \textbf{Lemma \ref{lemma:III}}), but also those for an ARC-R destitute of buffers (following \textbf{Lemma \ref{lemma:IV}}) and for an NV-chain (following \textbf{Lemma \ref{lemma:II}}) of same length. 
    
    \begin{figure}[ht]
    	\centering
        \includegraphics[width=0.8\linewidth]{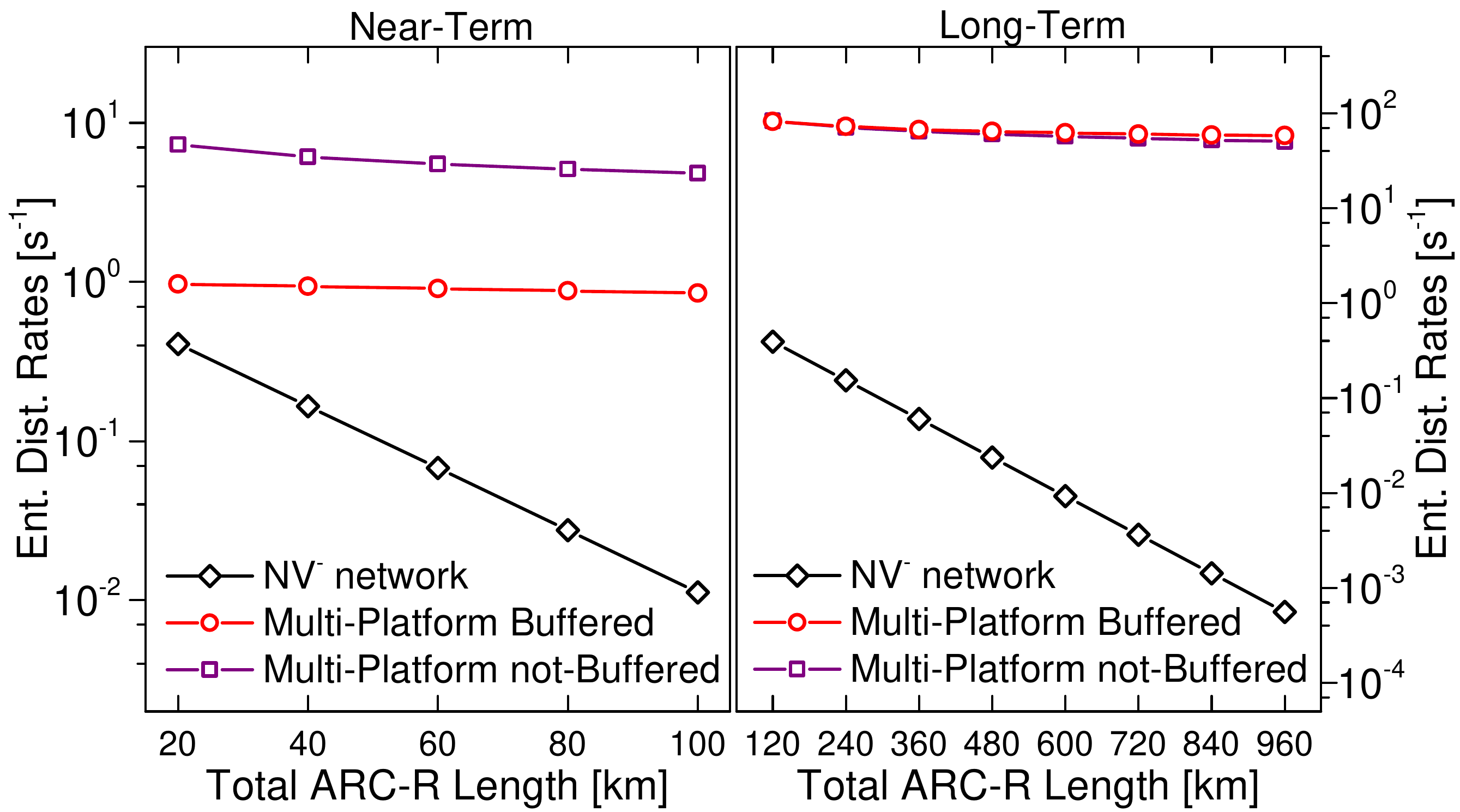}
	    \caption{Comparison between our multi-platform network with and without buffer and its homogeneous NV$^{-}$-based counterpart. Using near- and long-term parameters, entanglement distribution rate is computed versus ascending ARC-R length.}
        \label{fig:bufferComparison}
    \end{figure}
    
    The first comment that should be made from Fig. \ref{fig:bufferComparison} is that the rates computed with \textbf{Lemma \ref{lemma:III}} between two adjacent QRs connected by an AFC-based ARC match those computed with \textbf{Lemma \ref{lemma:I}} for the equivalent cases, as it is expected. Secondly, the fact that the achievable entanglement distribution rates of the homogeneous NV-chain scales badly in comparison to the multi-platform network attests the former's achievement in improving the rates albeit increasing the complexity of the network. This is a reflection of the spectral multiplexing feature of the AFC-based ARC in conjunction with the deterministic nature of entanglement swapping once the EPR-pairs distributed in a probabilistic fashion within the ARCs are mapped to the QR's internal memory. Finally, we highlight the dependency of the ARC-R's overall performance on the buffer's efficiency: in the near-term, the absence of the buffer causes the rates to increase whereas in the long-term the buffered ARC-R reaches higher rates. \textcolor{black}{The reason behind that is the higher efficiency levels achieved by the buffer infrastructure in the long-term when compared to the near-term; therefore, the buffer utility is expected to be more relevant if better performance can be achieved. The fact that a marginal gain is observed in Fig. 7-b, however, could raise the question of whether this is indeed the case. As will be clarified in Fig. 9, the entanglement distribution protocol currently considered for the proposed network, although allowing for a simplified analysis of the rates, is sub-optimal and becomes more taxing as the overall network efficiency increases. This is because, for a less efficient network, waiting for tau seconds to perform the entanglement swapping operations is less prejudicial than for a more efficient network since the probability that the routers are already loaded before tau seconds is very low for the former. Although the marginal gain of the buffered network with respect to the non-buffered network observed in Fig. 7-b is already expected for a linear network, where each router is connected to at most two ARCs, (even for an optimal entanglement distribution protocol) it is further decreased due to the shortcomings of the considered entanglement distribution protocol.} We conjecture that the impact of the buffer will be even greater in a more complex network topology, where the QRs can be connected to multiple ARCs through more than two buffers, i.e., \textcolor{black}{the more efficient the components and the more complex the network topology, the more imperative the buffer.}
    
    \begin{figure}[ht]
    	\centering
        \includegraphics[width=0.8\linewidth]{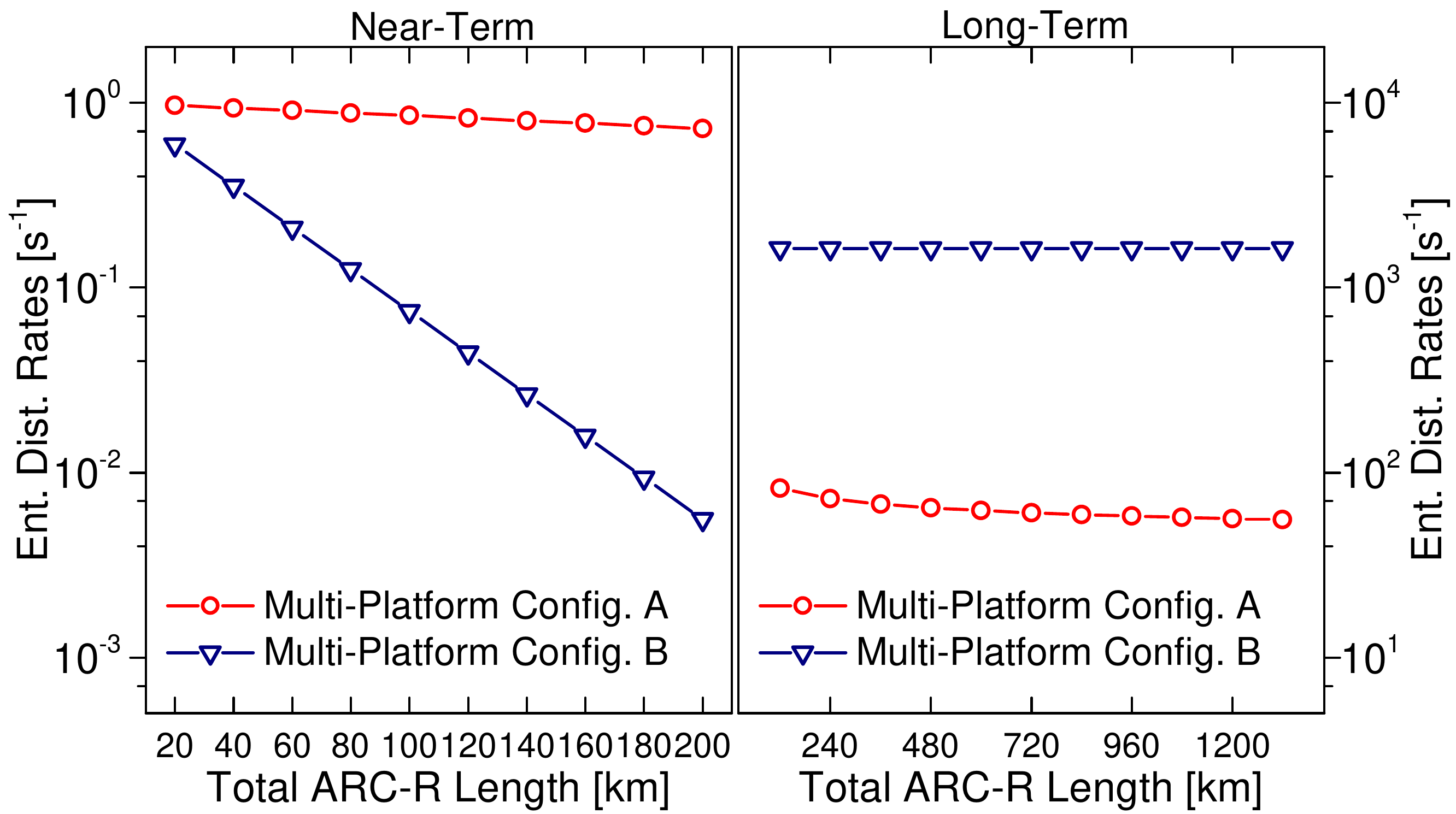}
	    \caption{Entanglement distribution rates achieved by configurations A and B for near- and long-term parameters as a function of total ARC-R length.}
        \label{fig:Config A and B}
    \end{figure}
    
    The final important entanglement distribution rate comparison of the proposed multi-platform network is regarding configurations A and B. The results, also computed as a function of increasing ARC-R length, are depicted in Fig. \ref{fig:Config A and B}. The dependence on the near- and long-term parameters can once again be observed since, in the former, configuration A outperforms B, while for the latter it is the opposite. This result is tightly associated to the projected increase in efficiencies for those operations required for the QST between the ARC and the QRs. Configuration B makes use of more QRs and, thus, more such QST operations are necessary for the same ARC-R length; thus, as these efficiencies increase in the long-term projections, this configuration achieves higher entanglement distribution rates. Simultaneously, the observation stated in Table \ref{table:synchronism} also plays an important role, since the length of the elementary links are shortened in configuration B and, thus, will allow for higher achievable rates. An in-depth comparison between configurations A and B with an optimization perspective, even though not in the scope of this work, could further elucidate the regimes where one outperforms the other. This also applies to number of elementary links composing an ARC, especially when the fidelity of the distributed EPR-pair is taken into account. The latter is discussed in the next Section, where a simplified decoherence model is employed to extract expected fidelity values achieved by the network architectures herewith discussed.
    
    To finalize this Section, we show, in the left pane of Fig. \ref{fig:tauBehavior}, the behavior of the parameter $\tau$ as a function of the number of AFC-based ARCs that compose an ARC-R. Since $\tau$ has a high dependency on the probability of distribution of entanglement within an ARC, it is also expected that, as the length of the elementary link changes, $\tau$ should decrease. Although, in the previous analyses, the length of the elementary link was always considered as a fixed parameter, we showcase this dependency in the right panel of Fig. \ref{fig:tauBehavior}, where the length of the elementary link varies between 10 and 100km irrespective of the QM's storage time. In order to simplify these results, the number of ARCs in the ARC-R is fixed at 1 and configuration A is considered. For both figures, the number of elementary links in an ARC has been fixed to 1 and 2 for near- and long-term parameters, respectively, and, in consistency with the previous results, the value of $\epsilon$ is kept at $5\%$.
    
    \begin{figure}[ht]
    	\centering
    	\includegraphics[width=0.8\linewidth]{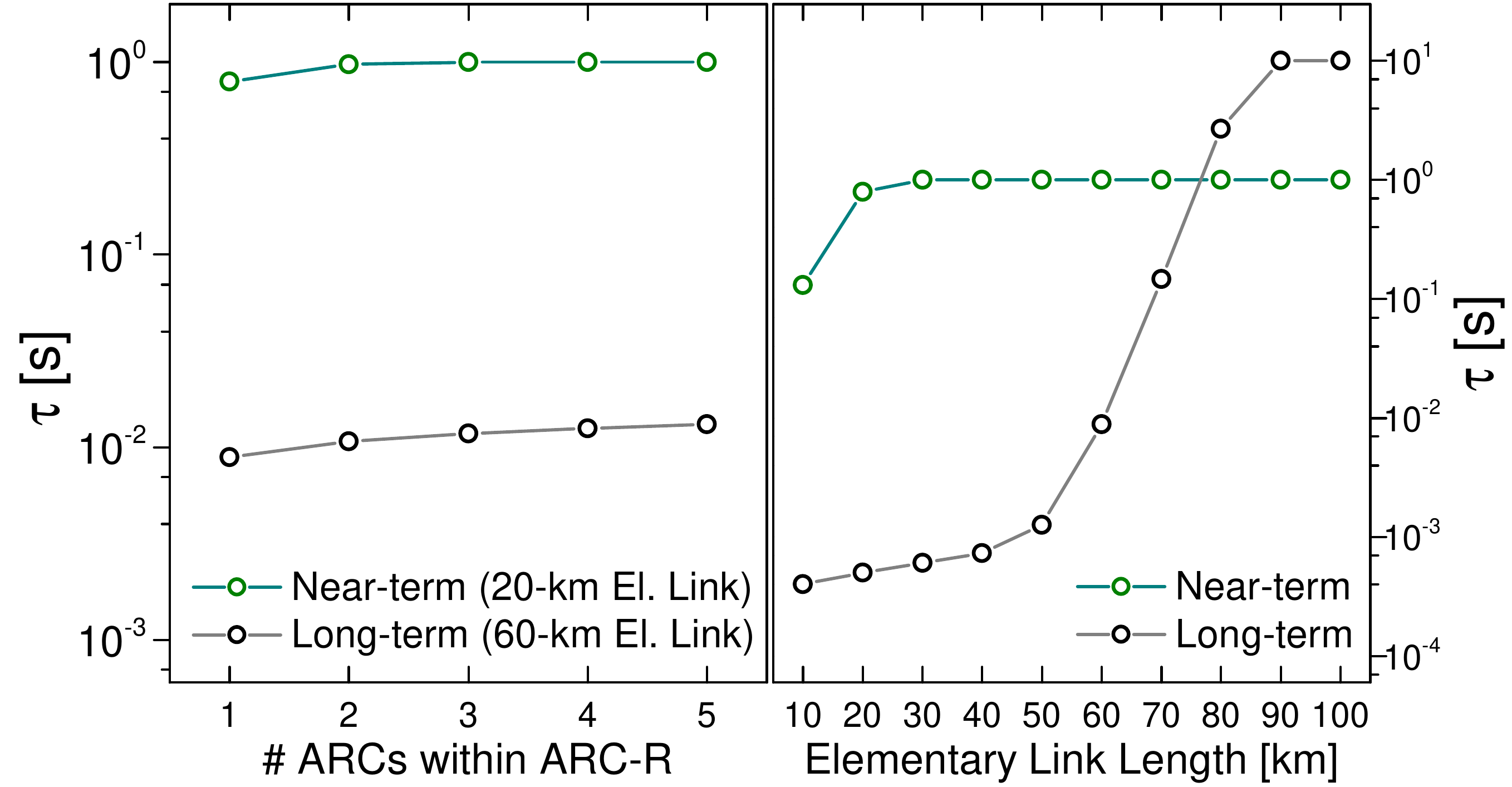}
    	\caption{Behavior of the parameter $\tau$ under different network conditions. On the left pane, $\tau$ is depicted as a function of the number of fixed-length ARCs within an ARC-R; on the right, the elementary link length is varied for an ARC-R composed of a single ARC.}
    	\label{fig:tauBehavior}
    \end{figure}
    
    The results of Fig. \ref{fig:tauBehavior} are important to highlight the fact that setting up a pre-determined time before CNOT operations in the QRs simplifies the analysis of the network but is far from being the optimal entanglement distribution protocol over a chain network. In fact, one can observe that, especially for near-term parameters, the value of $\tau$ reaches the maximum storage time of the QR's internal memory (1 second, according to the first column of Table \ref{tab:params1}). The greater drawback of this protocol is the likely creation of idle QRs, which are ready to perform an entanglement swapping operation, but are tied to the pre-determined value of $\tau$.
    
    \section{Network Fidelity Analysis}\label{sect:V}
    
    The previous Section focused on determining the expected entanglement distribution rates in the proposed multi-platform quantum network. Although achieving high-rates is an important goal of the network, ensuring that the distributed bipartite states exhibit high fidelity with respect to an ideal maximally-entangled Bell-state is a prerequisite for the usefulness of the network. In other words, achieving the highest possible rates corresponds to one aspect of the role of the quantum network, which only becomes viable in case those states that are distributed actually meet the fidelity criterion that allows the end-nodes to make use of them to implement the envisioned quantum communication and computation protocols. Here, we split the distribution rate and fidelity calculation, with the former already covered in the previous section. The latter is determined through a simplified model, where the imperfect states generated by the EPPSs are considered to be Werner states \cite{werner1989quantum} and all the processes' effect on the fidelity of the distributed states are that of a depolarizing channel \cite{king2003capacity,ritter2005quantum}. 
    
    To create the model using Werner states and depolarizing channels, we follow the entanglement distribution protocol introduced in Section \ref{sect:III}. This protocol can be rewritten in terms of the processes that can lead to fidelity reduction during ARC-R operation, as follows.
    \begin{enumerate}
    \item Noisy EPR-pair (Werner state) generation in EPPSs and subsequent entanglement distribution within an elementary link, including storage inside the QMs and FFSMM operation.
    \item Entanglement distribution to the edge QMs in an ARC by means of intermediate BSM operations between elementary links.
    \item Quantum state transfer into the QR's internal memory involving all the processes detailed in Fig. \ref{fig:quantStateTransfer} and explained throughout.
    \item Storage within the QR's internal memory for a predefined time period $\tau$, determined according to \textbf{Lemma \ref{lemma:III}} for a specific network design.
    \item QR-QR entanglement swapping operations leading to entanglement distribution to the end nodes of the ARC-R.
    \end{enumerate}
    This stratification of the operations allows a bottom up approach to the fidelity analysis. Furthermore, we consider, here, only the buffered-router-assisted automated repeater chain as proposed and discussed in Sections \ref{sect:II} and \ref{sect:III}. Near-term fidelity parameters for each operation are extracted directly from recent literature results, which are detailed in Table \ref{tab:paramsFid}, along with long-term and ideal parameters, analogously to those efficiency parameters previously detailed in Table \ref{tab:params1}.
    
    \begin{table}[ht]
    \centering
    \renewcommand{\arraystretch}{1.1}
    \begin{tabular}{c | c || c | c}
    Symbol & Description & Near-Term & Long-Term\\
    \hline
    \hline
    \multicolumn{4}{c}{Automated Repeater Chain}\\
    \hline
    $\mathcal{F}^{\text{EPPS}}$ &
    EPPS Fidelity w.r.t. $\ket{\Phi^{+}}$ &
    93.3\% \cite{liu2021experimental} &
    99.0\%\\
    $\mathcal{F}^{\text{AFC}}$ &
    AFC QM storage Fidelity &
    96.8\% \cite{zhong2017nanophotonic} &
    99.0\%\\
    $\mathcal{F}^{\text{BSM}}$ &
    LO BSM Proj. Fidelity &
    97.2\% \cite{zhang2019experimental} &
    99.0\%\\
    $\mathcal{F}^{\text{FFSMM}}$ &
    FFSMM map. Fidelity &
    97.0\% \cite{sinclair2014FFSMM} &
    99.0\%\\
    \hline
    \multicolumn{4}{c}{Quantum State Transfer}\\
    \hline
    $\mathcal{F}^{\text{BUFF}}$ &
    Buffer storage Fidelity &
    99.6\% \cite{cho2016highly} &
    99.9\%\\
    $\mathcal{F}^{\text{QFC}}$ &
    QFC conv. Fidelity &
    99.8\% \cite{bock2018high} &
    99.9\%\\
    $\mathcal{F}^{\text{TB-POL}}$ &
    T.B.Q.$\rightarrow$Pol.Q. conv. Fidelity &
    97.4\% \cite{Pan-PolarizBinToTimebin}*&
    99.0\%\\
    $\mathcal{F}^{\text{MAP}}$ &
    Phot.Q.$\rightarrow e^{-}$ map. Fidelity &
    94.4\% \cite{bhaskar2020experimental}* &
    99.0\%\\
    \hline
    \multicolumn{4}{c}{Quantum Router}\\
    \hline
    $\mathcal{F}^{^{13}\text{C}}$ &
    $e^{-}\!\rightarrow\!^{13}\!C$ swap Fidelity &
    99.7\% \cite{bradley2019ten} &
    99.9\%\\
    $\mathcal{F}^{\text{CNOT}}$ &
    $^{13}\!C-^{13}\!C$ CNOT Fidelity &
    97.2\% \cite{bradley2019ten} &
    99.0\%\\
    $\mathcal{F}^{\text{R-OUT}}$ &
    Readout Fidelity &
    94.5\% \cite{bradley2019ten} &
    99.0\%\\
    \hline
    \end{tabular}
    \caption{Detailed fidelity parameters of all platforms utilized in the multi-platform network. *Estimated based on experimental parameters provided by authors.}
    \label{tab:paramsFid}
    \end{table} 
    
    Starting from the generation of the noisy EPR-pairs in the EPPSs, the corresponding density matrix, written as a Werner state, is:
    \begin{equation}
        \rho^{\text{EPPS}} = W^{\text{EPPS}}|\phi^+\rangle \langle \phi^+| + \frac{1-W^{\text{EPPS}}}{4}~\mathbb{I}_4,
        \label{eq:densMat_epps}
    \end{equation}
    where $W^{\text{EPPS}}$ is the Werner parameter calculated from $\mathcal{F}^{\text{EPPS}}$ as $W^{\text{EPPS}} = \tfrac{4\mathcal{F}^{\text{EPPS}}-1}{3}$, $|\phi^+\rangle = \frac{1}{\sqrt{2}}(|00\rangle + |11\rangle)$, and $\mathbb{I}_4$ is the identity matrix of dimension $4\times4$. In the following, the structure of the density matrix presented in Eq. \ref{eq:densMat_epps} will appear frequently, as will the Werner parameter and its relationship with the fidelity parameter. In order to simplify the next few steps, we establish the following notation for the density matrix $\rho'_W$ of an arbitrary Werner state with Werner parameter $W'$, and associated fidelity with respect to an ideal EPR-pair $\mathcal{F}'$:
    \begin{equation}
        \begin{split}
            \rho'_W\left(W'\right) &:= W'|\phi^+\rangle \langle \phi^+| + \frac{1-W'}{4}~\mathbb{I}_4;\\
            W' &:= \tfrac{4\mathcal{F}'-1}{3};\\
            \mathcal{F}' &:= \tfrac{3W'+1}{4}.
        \end{split}\label{eq:fidNotation}
    \end{equation}
    
    Within an elementary link, following a BSM remote operation, entanglement is swapped between those states previously stored in the quantum memories, which are retrieved and undergo a feed-forward spectral mode-mapping operation. By modeling the BSM operation, the storage-and-retrieval fidelity in the QM, and the subsequent FFSMM operation, each as a depolarizing channel, one can write the density matrix at the output of the processes $\rho_{\text{out}}\left(W_{\text{out}}\right)$, given the density matrix at their inputs $\rho_{\text{in}}\left(W_{\text{in}}\right)$, as:
    \begin{equation}
        \rho_{\text{out}}\left(W_{\text{out}}\right) = \alpha\rho_{\text{in}}\left(W_{\text{in}}\right) + \left(1-\alpha\right)\frac{\mathbb{I}}{4},
    \end{equation}
    where $\alpha$ can be thought of as the transition probability of these binomial processes and is intimately tied to their fidelity. In fact, this relationship is the same as the one presented in Eq. \ref{eq:fidNotation}. Therefore, by replacing $\alpha$ by $W$, for each of the processes considered above, one finds the density matrix associated to the imperfect EPR-pair distributed within an ARC's elementary link to be $\rho^{\text{elem}}_W\left(W^{\text{elem}}\right)$, with
    \begin{equation}
        W^{\text{elem}} = W^{\text{BSM}}\left(W^{\text{EPPS}}W^{\text{AFC}}W^{\text{FFSMM}}\right)^2.
        \label{eq:fid_elemLink}
    \end{equation}
    The expression in Eq. \ref{eq:fid_elemLink}, in analogy with Eq. \ref{eq:P_elemLink}, encapsulates the fidelity of the noisy EPR-pair distributed within an elementary link of an ARC. An ARC consisting of n such elementary links requires $n$ steps where  $\rho_W^{\text{elem}}$ is generated within the elementary links and $n-1$ intermediate BSM operations. Applying the same procedure as previously, this leads to a density matrix $\rho_W^{\text{ARC}}\left(W^{\text{ARC}}\left(n\right)\right)$ with an associated Werner parameter:
    \begin{equation}
        W^{\text{ARC}}\left(n\right) = \left(W^{\text{BSM}}\right)^{\left(n-1\right)}\left(W^{\text{elem}}\right)^n.
    \end{equation}
    
    After entanglement distribution to the QMs at the edges of an ARC of length $n$, the quantum state transfer into the QR's internal memory follows. The overall impact on the fidelity of the quantum state can be agglutinated in a parameter $W^{\text{QST}}$ that describes the effect of the processes, each considered as a depolarizing channel:
    \begin{equation}
        W^{\text{QST}} = \left(W^{\text{BUFF}}W^{\text{QFC}}W^{\text{TB-POL}}W^{\text{MAP}}\right).
        \label{eq:w_qst}
    \end{equation}
    A comment must be made at this point, regarding network configurations A and B. In the former case, an extra $(W^{AFC})^{\left(n-1\right)}$ must be included in the above expression to account for the extra $\left(n-1\right)$ QMs included at the edges of the ARCs. Also regarding Eq. \ref{eq:w_qst}, the effect of the buffer is simplified due to the fact that the longer possible storage in the buffer (500 $\mu$s) is much shorter than the spin coherence time (1 ms). The density matrix of the state stored within the internal memories of two adjacent QRs interconnected through an ARC of length $n$ is, thus, $\rho^{\text{QR-QR}}_W\left(W^{\text{QR-QR}}\left(n\right)\right)$, with an associated Werner parameter:
    \begin{equation}
        W^{\text{QR-QR}}\left(n\right) = W^{\text{ARC}}\left(n\right)\left(W^{\text{QST}}\right)^2.
        \label{eq:fid_arc}
    \end{equation}
    
    We have, so far, covered points (i) to (iii) of the ARC-R operation, the ones that involve entanglement distribution between two adjacent QRs mediated by an ARC of length $n$; this is in analogy with \textbf{Lemma \ref{lemma:I}}, which determines the rate in such a scenario. In the next step, the QRs store the quantum states inside their internal memories for up to a time $\tau$, as defined in \textbf{Lemma \ref{lemma:III}}. We consider a worst-case scenario, where the states are always stored for the longer amount of time and, thus, experience maximum decoherence during storage. This decoherence, modeled as a depolarizing channel, assumes the form
    \begin{equation}
        \rho_{\tau} := e^{\left(-b\tau\right)} \rho_W + \left(1- e^{\left(-b\tau\right)}\right)\frac{\mathbb{I}}{4},
    \end{equation}
    for a given initial density matrix $\rho_W$, where the exponential parameter is $b=1/3$, as per \cite{rozpedek2019building}. The storage occurs after a swap operation from the electron spin to the internal memory of the QR, such that the density matrix that represents the states shared by two adjacent QRs, after a storage time $\tau$, 
    is $\rho^{\text{QR-QR}, \tau}_W\left(W^{\text{QR-QR}}\left(n,\tau\right)\right)$, with an associated Werner parameter:
    \begin{equation}
        W^{\text{QR-QR}}\left(n,\tau\right) = W^{\text{QR-QR}}\left(n\right)\left(W^{^{13}\text{C}}e^{\left(-b\tau\right)}\right)^2.
    \end{equation}
    
    At the last stage of the entanglement distribution process, CNOT-based BSM operations followed by the state readout are performed within the intermediate routers. For an ARC-R of length $N$, i.e., with $N$ ARCs interconnecting $N-1$ QRs, we are left with a density matrix $\rho^{\text{ARC-R}}_W\left(W^{\text{ARC-R}}\left(n,N,\tau\right)\right)$, with associated Werner parameter
    \begin{equation}
        \begin{split}
        W^{\text{ARC-R}}\left(n,\tau,N\right) &= \left(W^{\text{QR-QR}}\left(n,\tau\right)\right)^N\times\\ &\hspace{-1.5cm}\left(\left(W^{^{13}\text{C}}e^{\left(-b\tau\right)}\right)^2W^{\text{CNOT}}W^{\text{R-OUT}}\right)^{N-1},
        \label{eq:fid_arc-r}
        \end{split}
    \end{equation}
    and, finally, an estimated fidelity with respect to an ideal maximally-entangled Bell state
    \begin{equation}
        \mathcal{F}^{\text{ARC-R}}\left(n,\tau,N\right) = \frac{3W^{\text{ARC-R}}\left(n,\tau,N\right)+1}{4}. 
    \end{equation}
    
    Eqs. \ref{eq:densMat_epps}-\ref{eq:fid_arc-r} provide a simplified means to analyze the fidelity of the states distributed at different steps---and with different parameters---of the proposed network. In order to provide an analysis of the achievable fidelity results in conjunction with those of the rates, we focus on Eqs. \ref{eq:fid_arc} and \ref{eq:fid_arc-r} to present Fig. \ref{fig:fidelity_near_long}, where, for simplicity, only configuration A is analyzed, since the fidelity results for configuration B are dramatically worse.
    
    \begin{figure}[ht]
    	\centering
    	\includegraphics[width=0.8\linewidth]{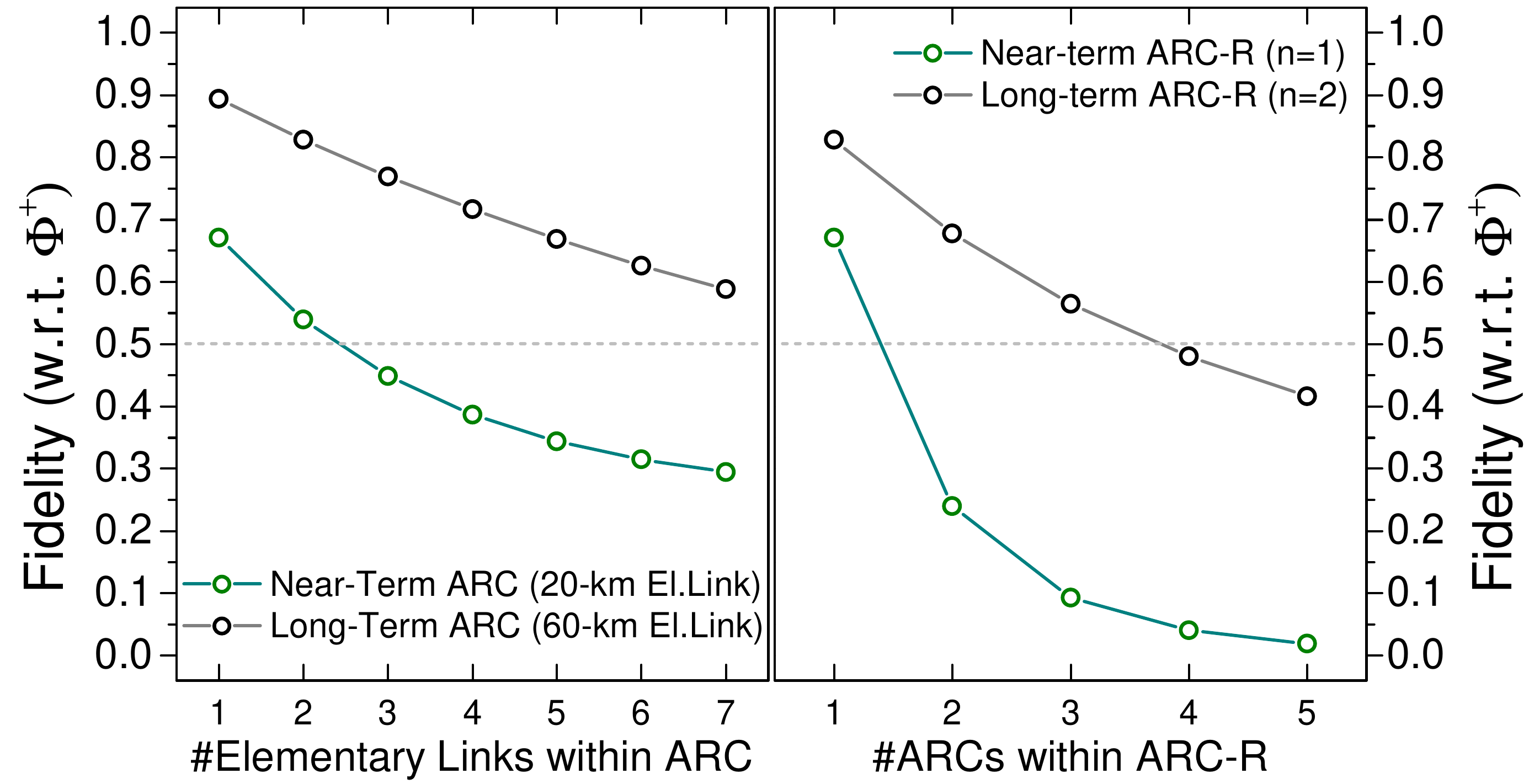}
    	\caption{Achievable fidelity values (with respect to a maximally-entangled Bell state) of the states distributed by the proposed multi-platform network. On the left pane, the fidelity is depicted as a function of the number of elementary links within an ARC; this corresponds to the QR-QR fidelity, or Eq. \ref{eq:fid_arc}. On the right panel, the fidelity is estimated based on Eq. \ref{eq:fid_arc-r} and is depicted as a function of the length of the ARC-R in terms of how many ARCs compose it.}
    	\label{fig:fidelity_near_long}
    \end{figure}
    
    We start by outlining, on the left panel of Fig. \ref{fig:fidelity_near_long}, the performance of a single ARC that connects two adjacent QRs in terms of the achievable fidelity---with respect to a maximally-entangled Bell-sate---of the distributed states as the number of elementary links that compose the ARC increases for near- and long-term parameters. Here, the elementary link lengths are kept fixed at 20 and 60km, respectively, in accordance with previous rate results analysis. The results show that the fidelity rapidly decreases for near-term parameters, though it is possible to distribute states with a fidelity above the threshold of $50\%$, which would still allow for entanglement distillation to take place, for short ARCs, i.e., QRs spaced by 20 km. This is a promising result as far as near-term goals are concerned, especially when analyzed in conjunction with the results of Fig. \ref{fig:NV vs AFC}, i.e., that a rate of roughly one noisy EPR-pair per second can be achieved with a fidelity of $\sim70\%$.
    
    On the right panel of Fig. \ref{fig:fidelity_near_long}, the fidelity values for an ARC-R are depicted, for an increasing number of constituent ARCs. Here, the number of elementary links that compose a single ARC are kept at $n=1$ and $n=2$ for near- and long-term parameters, respectively. As expected, the results of near-term parameters do not allow for distribution of useful states across a network longer than 20km, since the fidelity quickly drops below the $50\%$ threshold. For long-term parameters, however, it would be possible to operate, for a nearly 200km-long ARC-R, with a fidelity of $\sim60\%$ and a distribution rate of 100 noisy EPR-pairs per second. Moreover, we point out that, in the long-term regime, a $120$km-long ARC-R achieves an end-to-end fidelity ($\mathcal{F}^{\text{ARC-R}}\left(n,\tau,N\right)$) well above $\sim80\%$. The estimated quantum bit-error rate (QBER) for such noisy EPR-pairs (i.e., the probability of observing two different measurement outcomes of the shared EPR-pair) is given by $\frac{2}{3}(1-\mathcal{F}^{\text{ARC-R}}\left(n,\tau,N\right)) = \frac{2}{3}(1-0.8)= 0.1333$. This allow the extraction of secret keys using sophisticated quantum key distribution protocols like the so-called two-way six-state protocol \cite{WMU07}. 
    
    \section{Discussion}
    
    The analyses conducted in Sections \ref{sect:IV} and \ref{sect:V}, even assuming simplified models, allow bringing the concept of such a multi-platform quantum network to a more tangible realm. Beyond the hard values of rates and fidelity estimated for the different network topologies discussed here, which are intrinsically tied to the parameters compiled in Tables \ref{tab:params1} and \ref{tab:paramsFid}, the potential of combining distinct platforms that operate under different protocols, wavelengths, and quantum information encoding should be highlighted. The possibility of harnessing specific resources from each of these platforms by assigning them to individual roles in the network design is a natural trend in the future of quantum communications.
    
    Despite the fact that the natural evolution of the platforms considered here (and also of those ones not considered) will impact the viability of the proposed network, in both its constituents and design, useful observations can be made from its current form. Probably the most striking of those is the balance between high entanglement distribution rates and fidelity of the distributed states with respect to an ideally maximally-entangled state. Although the objective is to be able to reach high values for both, there might be certain regimes where the rate could be augmented in detriment of the fidelity -- the distinction between configurations A and B discussed in Section \ref{sect:III} is an example of such regime. This strategy, if in keeping with the distributed state's fidelity threshold of $50\%$, can benefit from the technique of entanglement distillation \cite{BBPS96} inside the QRs (which have the potential to implement such techniques \cite{kalb2017entanglement}), or even at the end-nodes, to establish high quality entanglement at a reasonable rate.
    
    Throughout the analysis of the network, several simplifications were made, which, if revisited, could provide an avenue for optimizations of the figures-of-merit currently estimated. These are, for instance, the fact that the length of the ARC was considered to be the same within an ARC-R. Although this has a significant positive impact on the complexity of the synchronization of the network, it corresponds to a particular solution of the problem, and analyzing different possible configurations is an interesting optimization problem \cite{rabbie2020designing}. Besides the design optimizations, it is very clear that the entanglement distribution protocol can also be optimized, also associated to an increase of complexity of network synchronization. Specifically, performing the CNOT-based BSM inside the QRs based on availability of states distributed by an ARC and not on a pre-determined parameter, such as the one introduced here ($\tau$). 
    
    An important remark is that the most compelling argument for the inclusion of the buffers in the network design, the possibility of connecting more than two ARCs to a single QR, has not been analyzed deeply. The reason behind it is the drastic increase in complexity involved in analyzing the behavior of a point-to-multipoint, or even multipoint-to-multipoint, network compared to analyzing a simple point-to-point chain network. Even though the performance boost with respect to an equivalent network without buffers in the former cases is expected to be more impactful than those presented in Fig. \ref{fig:bufferComparison} for a chain network, it is left as a future point of investigation, along with the optimization possibilities introduced above. Finally, it is important to mention that, although SPDC-based EPPSs have been introduced in the design due to their potential of creating spectrally-multiplexed photon-pairs at high rates, it is still the platform that provides the most limitations, both in terms of efficiency and fidelity; this effect is magnified due to the high number of such sources required to build the proposed network. In the future, a promising technology such as semiconductor quantum dots \cite{laferriere2020multiplexed} could be a substitute for SPDCs in what we conjecture to be the bottleneck of the network, the EPPSs.
   
    \section{Conclusion}
    
    A multi-platform quantum network is proposed and analysed in terms of achievable entanglement distribution rates and expected fidelity of the distributed states with respect to a maximally-entangled state, where we capitalize on unique features of its individual constituents. This is possible due to a stratification of the entanglement distribution protocol within three main structures: the elementary links that compose a quantum repeater chain; the so-called automated repeater chains; and the buffered-router-assisted automated repeater chain. We demonstrate that, even though the scaling of the rates, with an increasing network length, is harsher for the proposed network than for its NV-based counterpart, the boost due to frequency-multiplexing creates regimes where it becomes more profitable. By compiling current state-of-the-art parameters of efficiencies and fidelity, we were able to contextualize the network and demonstrate its advantages, especially when long-term parameters (based on the current level of maturity of most of these technologies) are considered. The current proposal not only introduces a novel network topology design and entanglement distribution protocol, but also opens up new avenues of optimizations of the design for future works.
    
    \section*{Acknowledgements}
    The project was developed during the authors' post-doctorate appointments at QuTech, where the lively atmosphere of discussion and contribution motivated this work. KC thanks Stephanie Wehner, Filip Rozpedek, and Guus Avis for useful discussions. 

    \section*{References}


\providecommand{\newblock}{}

\end{document}